\documentclass[journal=jctcce,manuscript=article,layout=twocolumn]{achemso}

\usepackage{graphicx}
\usepackage{dcolumn}
\usepackage{bm}
\usepackage{graphicx,amsmath}
\usepackage{tikz}
\usepackage[version=4]{mhchem} 
\usetikzlibrary{quantikz}
\usepackage{subfigure}
\usepackage{amsmath}
\usepackage{amssymb}
\usepackage{algpseudocode}
\usepackage{algorithm}
\usepackage{amsthm}

\usepackage{makecell}
\usepackage{physics}
\usepackage{dblfloatfix}
\usepackage{placeins}
\usepackage{xurl}

\newtheorem{lem}{Lemma}
\newtheorem{thm}{Theorem}

\title{
Clifford disentanglers for entanglement reduction in molecular electronic structure simulations
}
\author{Longfei Chang}\altaffiliation{Contributed equally to this work.}
\affiliation{Key Laboratory of Theoretical and Computational Photochemistry, Ministry of Education, College of Chemistry, Beijing Normal University, Beijing 100875, China}
\alsoaffiliation{Institute for Advanced Study, Beijing Normal University, Beijing, 100875, China}

\author{Zibo Wu}\altaffiliation{Contributed equally to this work.}
\affiliation{Key Laboratory of Theoretical and Computational Photochemistry, Ministry of Education, College of Chemistry, Beijing Normal University, Beijing 100875, China}
\alsoaffiliation{Institute for Advanced Study, Beijing Normal University, Beijing, 100875, China}

\author{Yunzhi Li}\altaffiliation{Contributed equally to this work.}
\affiliation{Key Laboratory of Theoretical and Computational Photochemistry, Ministry of Education, College of Chemistry, Beijing Normal University, Beijing 100875, China}
\alsoaffiliation{Institute for Advanced Study, Beijing Normal University, Beijing, 100875, China}

\author{Haiqi Liu}
\affiliation{Key Laboratory of Theoretical and Computational Photochemistry, Ministry of Education, College of Chemistry, Beijing Normal University, Beijing 100875, China}
\alsoaffiliation{Institute for Advanced Study, Beijing Normal University, Beijing, 100875, China}

\author{Jiajun Ren}\email{jjren@bnu.edu.cn}
\affiliation{Key Laboratory of Theoretical and Computational Photochemistry, Ministry of Education, College of Chemistry, Beijing Normal University, Beijing 100875, China}
\alsoaffiliation{Institute for Advanced Study, Beijing Normal University, Beijing, 100875, China}

\author{Mingpu Qin}\email{qinmingpu@sjtu.edu.cn}
\affiliation{Key Laboratory of Artificial Structures and Quantum Control (Ministry of Education),
School of Physics and Astronomy, Shanghai Jiao Tong University, Shanghai 200240, China}
\alsoaffiliation{Hefei National Laboratory, Hefei 230088, China}

\author{Zhendong Li}\email{zhendongli@bnu.edu.cn}
\affiliation{Key Laboratory of Theoretical and Computational Photochemistry, Ministry of Education, College of Chemistry, Beijing Normal University, Beijing 100875, China}
\alsoaffiliation{Institute for Advanced Study, Beijing Normal University, Beijing, 100875, China}

\author{Wei-Hai Fang}
\affiliation{Key Laboratory of Theoretical and Computational Photochemistry, Ministry of Education, College of Chemistry, Beijing Normal University, Beijing 100875, China}
\alsoaffiliation{Institute for Advanced Study, Beijing Normal University, Beijing, 100875, China}

\begin{document}

\begin{tocentry}

\includegraphics[width=\textwidth]{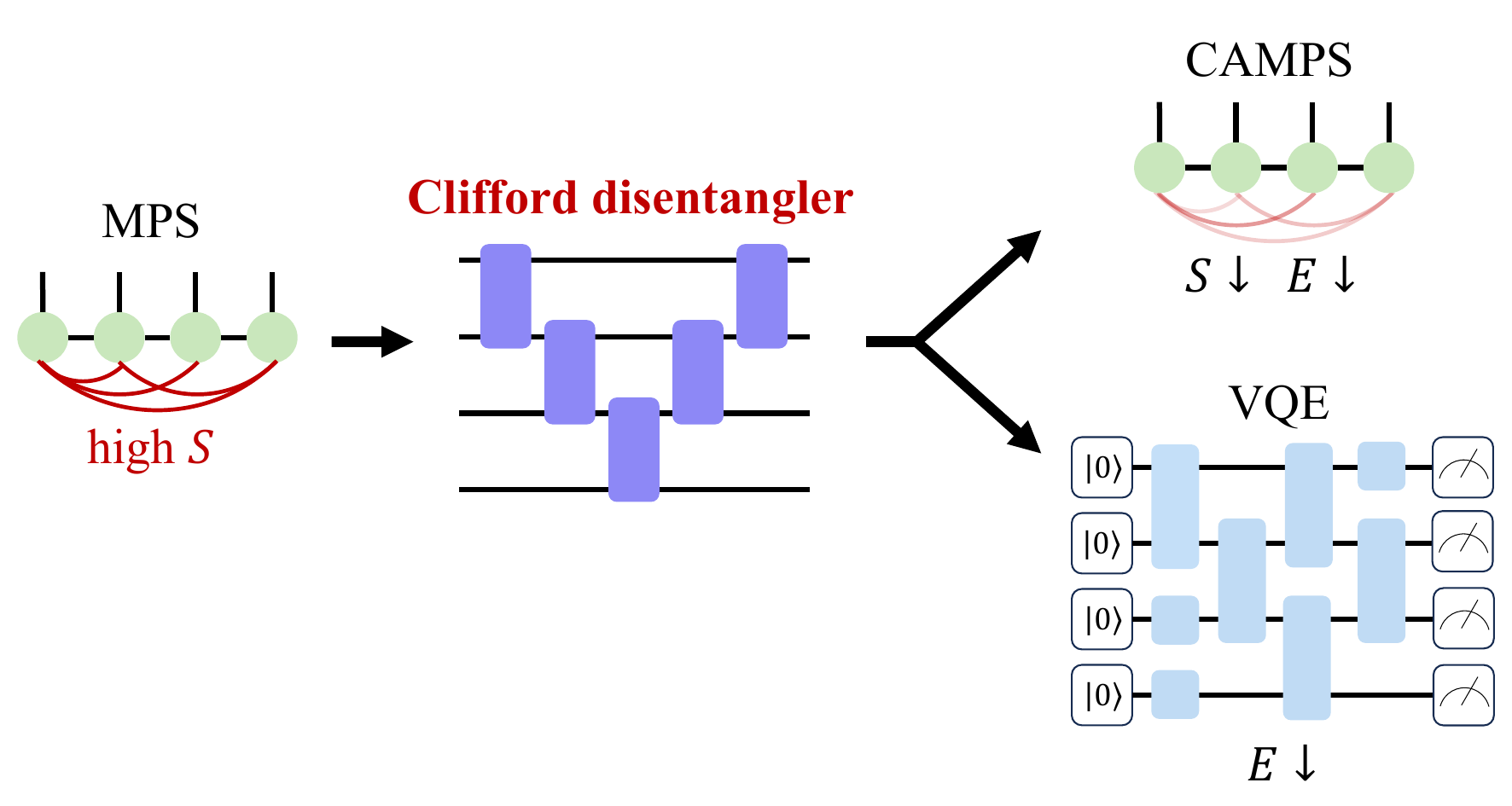}





\end{tocentry}

\begin{abstract}
Entanglement is a key bottleneck limiting the efficiency of tensor-network and quantum simulations of molecular electronic structures. Here, we systematically assess and extend Clifford disentanglers as a structure-preserving approach to entanglement reduction: they can modify the entanglement structure of qubit wavefunctions while retaining the Pauli-string form of qubit Hamiltonians. To enable a practical search over Clifford transformations, we classify Clifford operators by their action on the Schmidt spectrum across a bipartition, reducing the two- and four-qubit search spaces to 20 and 91392 representatives, respectively. Embedded in an iterative Clifford-augmented matrix product state framework, these transformations reduce the energy errors at fixed bond dimension for the molecular test cases studied and mitigate the dependence on orbital orderings and fermion-to-qubit mappings. We further show that Clifford disentanglers can also benefit quantum simulations such as the shallow-circuit variational quantum eigensolver calculations. 
Together, these results establish Clifford disentanglers as a useful structure-preserving entanglement-engineering tool for tensor-network and quantum simulations of molecular electronic structure, while also clarifying their correlation dependence and
motivating future developments.
\end{abstract}

\maketitle

\section{Introduction}
Entanglement fundamentally governs the computational complexity of quantum many-body simulations.\cite{vidal2003efficient, eisert2010colloquium} In tensor network approaches, the cost of representing a quantum state scales with its bipartite entanglement entropy rather than the dimension of the Hilbert space.\cite{orus2014practical, cirac2021matrix,xiang2023density} The accumulation of short-range entanglement was shown to hinder conventional real-space renormalization schemes, motivating Vidal to introduce the disentanglers in the multiscale entanglement renormalization ansatz (MERA)\cite{vidal_entanglement_2007,vidal_class_2008}. By applying local unitary transformations prior to coarse-graining, disentanglers systematically reshape entanglement structure. Beyond renormalization group settings, disentangling transformations have also been incorporated into matrix product state (MPS) frameworks\cite{white_density_1992,eisert2010colloquium,ran2020encoding}. Matrix product disentanglers have been proposed for state preparation\cite{ran2020encoding}, while the fully-augmented MPS (FAMPS) approach introduces them to enhance the expressive power of MPS for quantum many-body problems\cite{qian2023augmenting}. In the latter case, determining the disentanglers requires variational optimization using the standard Evenbly–Vidal algorithm\cite{evenbly2009algorithms}, which introduces additional computational complexity. This complexity motivates the search for more structured and analytically tractable classes of disentangling transformations.

In this context, Clifford operators can serve as a better candidate for disentanglers.\cite{gottesman1998heisenberg, nielsen2010quantum} Composed of Hadamard (H), phase (S), and controlled-NOT (CNOT) gates, these operators can generate substantial entanglement while remaining classically simulable by virtue of the Gottesman–Knill theorem\cite{gottesman1998heisenberg, aaronson2004improved}. More importantly, the conjugation of a qubit Hamiltonian, expressed as a linear combination of Pauli operators $\hat{H}=\sum_{i} c_i P_i$, by a Clifford operator maps each Pauli term to another Pauli term, leaving the total number of terms in the Hamiltonian representation unchanged\cite{nielsen2010quantum, bravyi2002fermionic}. This key property ensures that Clifford disentanglers can restructure the entanglement patterns of many-body wavefunctions without significantly increasing the complexity of variational optimization.
Leveraging these advantages, several Clifford-enhanced quantum or classical algorithms have been developed for many-body problems. The Schrödinger–Heisenberg variational quantum algorithm (SHVQA) integrates classically simulable Heisenberg circuits to enhance the expressivity of local unitary ansätze\cite{shang2023schrodinger}. Hierarchical Clifford transformation (HCT) methods exploit approximate $\mathbb{Z}_2$  Pauli symmetries to achieve near-block-diagonalization of Hamiltonians, reducing bipartite entanglement entropy and improving convergence in variational quantum eigensolvers\cite{mishmash2023hierarchical}. The Clifford-based Hamiltonian engineering method (CHEM) optimizes the Hamiltonian representation to enlarge the initial energy gradient of hardware-efficient ansätze, mitigating optimization difficulties on near-term quantum devices\cite{sun2024toward}. More recently, the Clifford-augmented matrix product states  (CAMPS) algorithm has employed $2$-qubit Clifford operators to minimize the singular values discarded during the sweeping procedure of DMRG calculations\cite{qian_augmenting_2024,huang_clifford_2025,qian_clifford_2025,fu_clifford_2025,huang2026augmenting}. Collectively, these developments establish Clifford operators as an important tool at the intersection of classical and quantum simulation paradigms for quantum many-body problems.

In this work, we systematically assess and extend Clifford disentanglers for electronic structure problems. We construct disentanglers from MPS wavefunctions and examine how the resulting Clifford-transformed Hamiltonians affect subsequent 
density matrix renormalization group (DMRG)\cite{white_density_1992} and
variational quantum eigensolver (VQE)\cite{peruzzo2014variational,mcclean2016theory} 
calculations. Specifically, we assess the accuracy improvement, the reduction of bipartite entanglement, and the dependence of CAMPS calculations on orbital orderings, fermion-to-qubit mappings, basis sets, and disentangling-unit size (2-qubit and 4-qubit) based on a classification of Clifford operators utilizing a symplectic representation.
This study delineates both the benefits and limitations of Clifford disentanglers 
as an entanglement-engineering strategy for molecular simulations.

\section{Theory and algorithms}
\subsection{Electronic structure problems and matrix product states}
We consider electronic structure problems described by the following second-quantized Hamiltonian 
\begin{gather}
\hat{H} =
\sum_{pq,\sigma}h_{pq}\hat{a}_{p\sigma}^\dagger\hat{a}_{q\sigma}
+\frac12\sum_{pqrs,\sigma\tau}g_{pqrs}\hat{a}_{p\sigma}^\dagger\hat{a}_{r\tau}^\dagger\hat{a}_{s\tau}\hat{a}_{q\sigma},
\end{gather}
where $h_{pq}$ and $g_{pqrs}$ are one- and two-electron integrals in an orthonormal molecular orbital basis, and $\hat a^\dagger_{p\sigma}$ ($\hat a_{p\sigma}$) are fermionic creation (annihilation) operators.
In this work, the fermionic Hamiltonian is mapped to a qubit Hamiltonian using standard fermion-to-qubit transformations,
such as the Jordan--Wigner (JW),\cite{jordan1928pauli}
parity,\cite{whitfield2011simulation,seeley2012bravyi}
and Bravyi--Kitaev (BK) mappings\cite{bravyi2002fermionic,seeley2012bravyi}
for the subsequent tensor network and quantum circuit simulations.

For a system with $K$ spin-orbitals, 
the MPS wavefunction is written as\cite{chan2016matrix}
\begin{align}
\ket{\Psi}
=
\sum_{s_1\cdots s_K}
A^{s_1}[1]
A^{s_2}[2]
\cdots
A^{s_K}[K]
\ket{s_1\cdots s_K},
\end{align}
where $A^{s_k}[k]$ is an $M\times M$ matrix, while the leftmost and rightmost sites 
are $1\times M$ and $M\times 1$ vectors, respectively.
Increasing $M$ enlarges the variational space and allows the MPS to represent states with larger entanglement. The Hamiltonian can be expressed as a matrix product operator (MPO), which enables efficient evaluation of expectation values and local tensor optimization by minimizing the energy of MPS. In this work, the Clifford disentanlers will be derived from a low-bond-dimension MPS (vide post) obtained from the DMRG algorithm.

\begin{figure*}[!t]
    \centering
    \includegraphics[width=\textwidth]{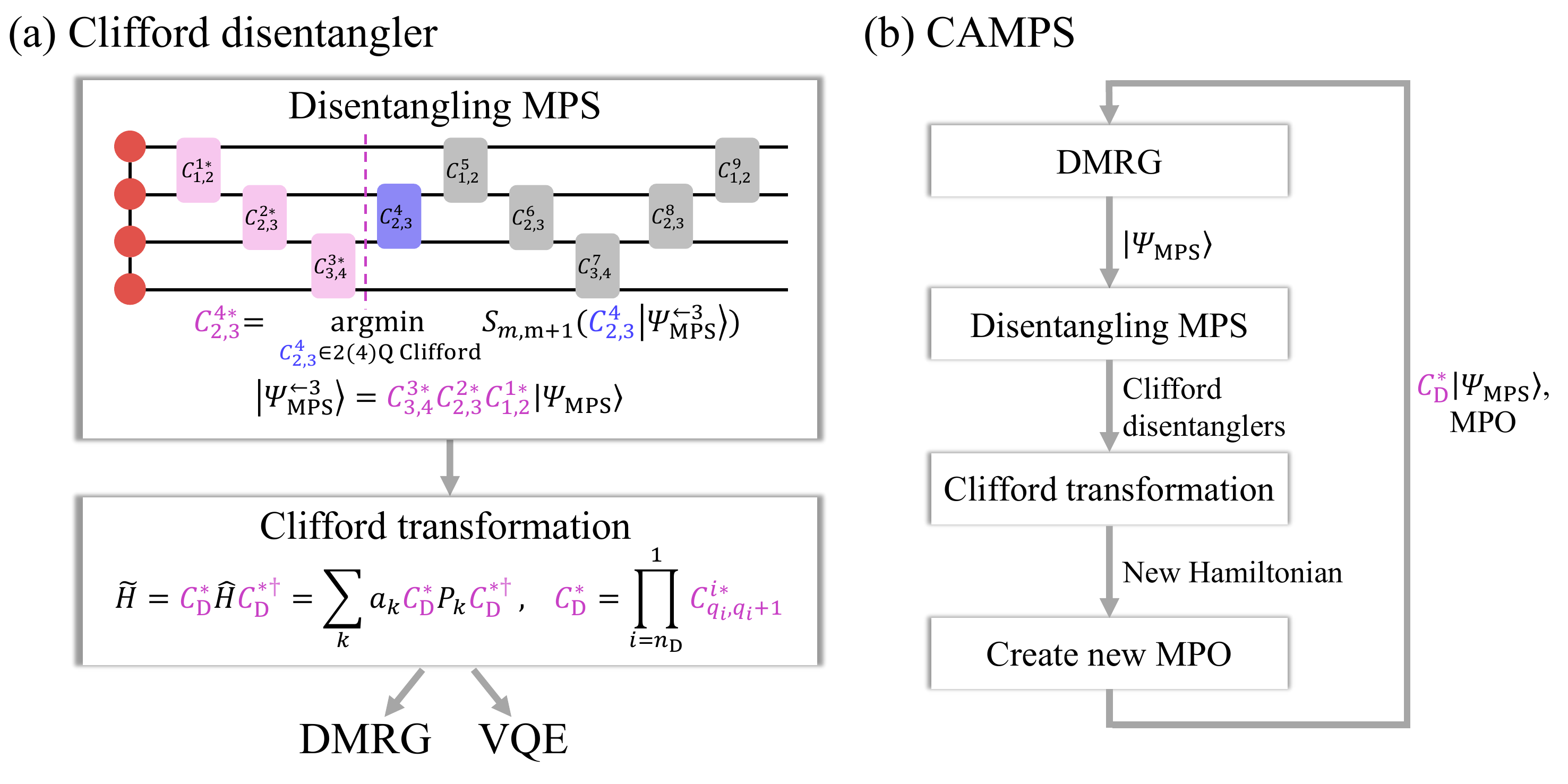}
    \caption{Schematic illustration of the determination of Clifford disentanglers from an MPS and the applications in DMRG and VQE. 
    (a) Local 2Q or 4Q Clifford gates are applied sequentially to an MPS to reduce entanglement. 
    For each bond, representatives of the relevant equivalent classes are tested
to minimize the bipartite entropy $S_{m,m+1}$. The sweeping process is repeated until the entropy converges, yielding the optimized disentangler $C_D^*$ within this
local search protocol. The transformed Hamiltonian can be used in subsequent DMRG and VQE.
(b) In CAMPS, DMRG first produces an MPS $|\Psi_{\mathrm{MPS}}\rangle$; then $C_D^*$ is determined
from the obtained MPS and the disentangled MPS $C_D^*|\Psi_{\mathrm{MPS}}\rangle$
is truncated to the original bond dimension.
The corresponding transformed Hamiltonian is converted to an MPO using
the bipartite graph algorithm in Ref. \cite{ren2020general}
for the next DMRG optimization. These steps are iterated until the energy converges.
}\label{fig: Clifford_disentangler}
\end{figure*}

\subsection{Clifford Operators and Symplectic Representation}
The Pauli group $\mathcal{P}_n$ on $n$ qubits is generated by tensor products of 
single-qubit Pauli operators $\{X,Y,Z\}$ with phase factors.
Specifically, any $n$-qubit Pauli operator can be written as
\begin{equation}
P = i^\kappa \bigotimes_{j=1}^n X_j^{x_j} Z_j^{z_j},
\end{equation}
where $x_j,z_j\in\mathbb Z_2$ and $\kappa = 0, 1, 2, 3$.
The Clifford group is defined as the normalizer of the Pauli group,
\begin{equation}
C\mathcal P_n C^\dagger = \mathcal P_n.
\end{equation}
Therefore, conjugation by a Clifford operator maps a Pauli operator to another Pauli operator.

Because of this property, Clifford circuits acting on the reference state $\ket{00\cdots 0}$ generate stabilizer states, which remain efficiently classically simulable according to the Gottesman–Knill theorem.\cite{gottesman1998heisenberg}
In the stabilizer formalism, an $n$-qubit stabilizer state is specified by $n$ independent commuting Pauli generators, which can be updated algebraically under Clifford circuits.\cite{gottesman1997stabilizer,aaronson2004improved}

This evolution admits a binary symplectic representation,
in which an $n$-qubit Clifford operator is represented by
a $2n\times2n$ matrix $S$ over $\mathbb Z_2$ satisfying
\begin{align}
S \Omega S^T = \Omega,
\qquad
\Omega = \bigoplus_{i=1}^n
\begin{pmatrix}
0&1\\
1&0
\end{pmatrix},
\end{align}
where each $S$ matrix encodes a mapping that maps the $n$-qubit Pauli group to itself (ignoring the phase):
\begin{align}
    X_i \to & \prod_{j=1}^n X_j^{S_{(2i-1)(2j-1)}} Z_j^{S_{(2i-1)(2j)}}, \nonumber\\
    Z_i \to & \prod_{j=1}^n X_j^{S_{(2i)(2j-1)}} Z_j^{S_{(2i)(2j)}}, \quad 1 \le i \le n,
\end{align}
which is actually another way to define the Clifford group \cite{Dehaene2003Cliffordgroup}.
With this row-vector convention\cite{Dehaene2003Cliffordgroup,Hostens2005Stabilizer}, symplectic matrices act on Pauli labels from the right. Consequently, if $C=C_1C_2$, then the corresponding phase-free symplectic representation is
\begin{equation}
S_C = S_{C_2}S_{C_1},
\end{equation}
i.e., the matrix product appears in the reverse order to the Clifford operator product.

Note that our choice of encoding and the definition of $\Omega$ differ slightly from common conventions in the literature\cite{Dehaene2003Cliffordgroup,aaronson2004improved}; the corresponding  matrices under the two conventions are related by simple permutations of rows and columns.

\subsection{Clifford disentanglers derived from MPS}

We construct a composite Clifford disentangler by sequentially applying two-qubit (2Q) or four-qubit (4Q) Clifford gates to an MPS obtained from a DMRG calculation, see Figure~\ref{fig: Clifford_disentangler}(a), with the objective of reducing the bipartite entanglement entropy across each MPS bond. The optimal gate is selected by minimizing the bipartite
1/2-R\'{e}nyi entropy
\begin{equation}
    S_\frac{1}{2}(\rho_A)=2\log(\Tr(\rho_A^{\frac{1}{2}}))=2\log\!\left(\sum_i \lambda_i\right),
\end{equation}
where $\rho_A$ is the reduced density matrix of subsystem $A$ and $\{\lambda_i\}$ are the Schmidt coefficients associated with the corresponding bipartition of the MPS. For simplicity, we denote $S_{1/2}$ as $S$. 
After optimization, the Clifford operator is applied to the Hamiltonian,
yielding a transformed Hamiltonian whose ground state has reduced
entanglement and can be represented better with the same bond dimension.
This transformed Hamiltonian can then be used in DMRG or VQE.
Figure \ref{fig: Clifford_disentangler}(b) illustrates the present variant of CAMPS algorithm, which differs from the previous algorithms\cite{qian_augmenting_2024}:
Clifford disentangling step is performed as a separate step
after a DMRG optimization, and the resulting MPS and transformed MPO initialize the subsequent DMRG calculation.
This separation allows us to construct MPO for the transformed Hamiltonian using the bipartite graph algorithm implemented in the \textsc{Renormalizer} package\cite{ren2020general}.

\subsection{2Q and 4Q Clifford disentanglers}
The number of distinct n-qubit Clifford operators is\cite{Koenig2014Cliffordenumerate}
\begin{align}
    N = 2^{n^2} \prod_{j=1}^n (4^j - 1),
\end{align}
which grows rapidly with $n$. 
The computational cost of the above disentangling step scales as $O(K N M^3)$. In this work, we consider 2Q Clifford gates in DMRG calculations with spin-orbitals, while 4Q Clifford gates are used in DMRG calculations with spatial orbitals.
For 2Q and 4Q Clifford operators, these numbers are $720$ and $47377612800$, respectively. Since our disentangling procedure requires searching over all possible local Clifford gates, direct enumeration quickly becomes computationally prohibitive, especially for 4Q disentanglers.

Since we are only concerned with the bipartite entanglement across a specific bond in the disentangling step,
we can introduce an equivalence relation among Clifford operators: two Clifford operators are said to belong to the same class if for any input MPS, they produce the same entanglement spectrum across the bond of interest. Operators within the same class differ only by two $\frac{n}{2}$-qubit Clifford operators acting on the first and the last $\frac{n}{2}$ qubits, as illustrated in Figure~\ref{fig: schematic-classification}. Such local transformations do not change the Schmidt spectrum and therefore do not affect the entanglement relevant to the optimization. Thus, the search for optimal local Clifford disentanglers can be confined to representatives of equivalence classes, resulting in orders-of-magnitude savings in computational cost.

\begin{figure}[!tbp]
	\begin{center}
		\includegraphics[width=\linewidth]{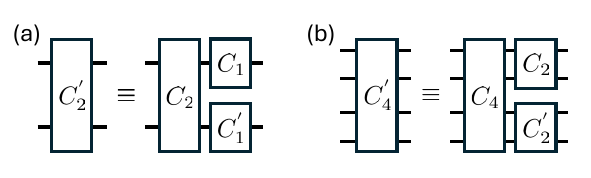}
	\end{center}
	\caption{
Schematics of 2Q and 4Q Clifford operators in the same equivalence class.
(a) Two 2Q Clifford operators $C_2$ and $C_2'$ belong to the same class if they differ only by arbitrary 1Q Clifford operators $C_1$ and $C_1'$ acting on the two qubits.
(b) Two 4Q Clifford operators $C_4$ and $C_4'$ belong to the same class if they differ only by arbitrary 2Q Clifford operators $C_2$ and $C_2'$ acting on qubits $(1,2)$ and $(3,4)$, respectively.}
\label{fig: schematic-classification}
\end{figure}

To identify these equivalence classes efficiently, we use the symplectic representation introduced before. Two Clifford operators with symplectic matrices $S_1$ and $S_2$ belong to the same class if
\begin{align}
    F \equiv S_2^{-1} S_1
    = \Omega S_2^T \Omega S_1
    =
    \begin{pmatrix}
        s & 0 \\
        0 & s'
    \end{pmatrix}
    \quad (\mathrm{mod}\;2),
\end{align}
where $s$ and $s'$ are $n \times n$ symplectic matrices. Equivalently, $S_2^{-1} S_1$ must be block diagonal, corresponding to independent Clifford transformations on the two halves. A brute-force classification based on pairwise comparison of all Clifford operators is still too expensive for large $N$, e.g, $N=4$. 
Therefore, we introduce a hash-based classification scheme, in which two Clifford operators are assigned to the same class if and only if they have the same hash value.  The classification algorithm and the proof of the equivalence relation are given in Appendix. Using this procedure, the full sets of 2Q and 4Q Clifford operators are reduced to $20$ and $91392$ distinct classes, respectively. This reduction substantially lowers the cost of the disentangling procedure and makes the use of 2Q and 4Q Clifford disentanglers practical in our calculations.


\subsection{Implementation details}
We implemented the Clifford disentangling procedure and the CAMPS algorithm in the \textsc{Camps} module\cite{campsmodule} of \textsc{Focus}.\cite{li2021expressibility,xiang_distributed_2024,li2025entanglement}
The electronic integrals were generated using PySCF.\cite{sun_recent_2020} 
The resulting second-quantized Hamiltonians were then transformed into qubit representations using the Jordan--Wigner\cite{jordan1928pauli} and parity\cite{whitfield2011simulation,seeley2012bravyi} mappings as implemented in OpenFermion.\cite{mcclean2020openfermion} 
For DMRG, the qubit Hamiltonians were converted automatically to MPO form using the bipartite graph algorithm implemented in \textsc{Renormalizer}\cite{ren2020general}. 
Since the Clifford disentanglers in general do not preserve the particle number symmetry, dense matrix algebra is used in CAMPS.
For the 2Q and 4Q classification, Clifford operators were generated with the \textsc{QuantumClifford} Julia package.\cite{QuantumClifford_github}

\begin{figure*}[!tbp]
    \centering
    \includegraphics[width=0.96\textwidth]{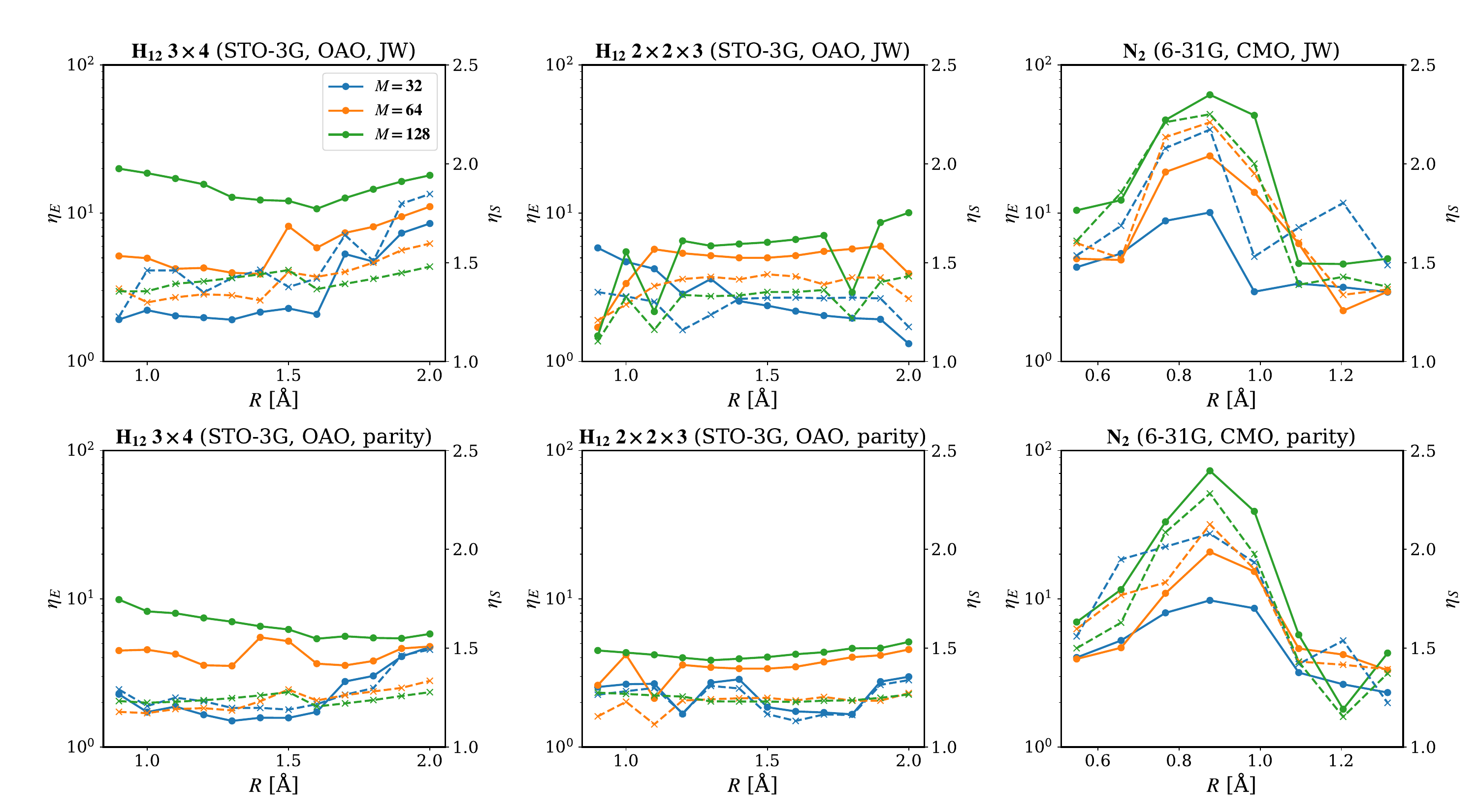}
    \caption{Bond-length dependence of the relative energy error ratio $\eta_E$ (solid lines) between DMRG and CAMPS and the ratio of the maximum entanglement entropies $\eta_S$ (dashed lines) for the MPS obtained from conventional DMRG and after Clifford disentangling step. Upper: JW mapping, Lower: parity mapping.
    }
    \label{fig: ratio_bonds}
\end{figure*}

\begin{figure*}[!t]
    \centering
    \includegraphics[width=0.96\textwidth]{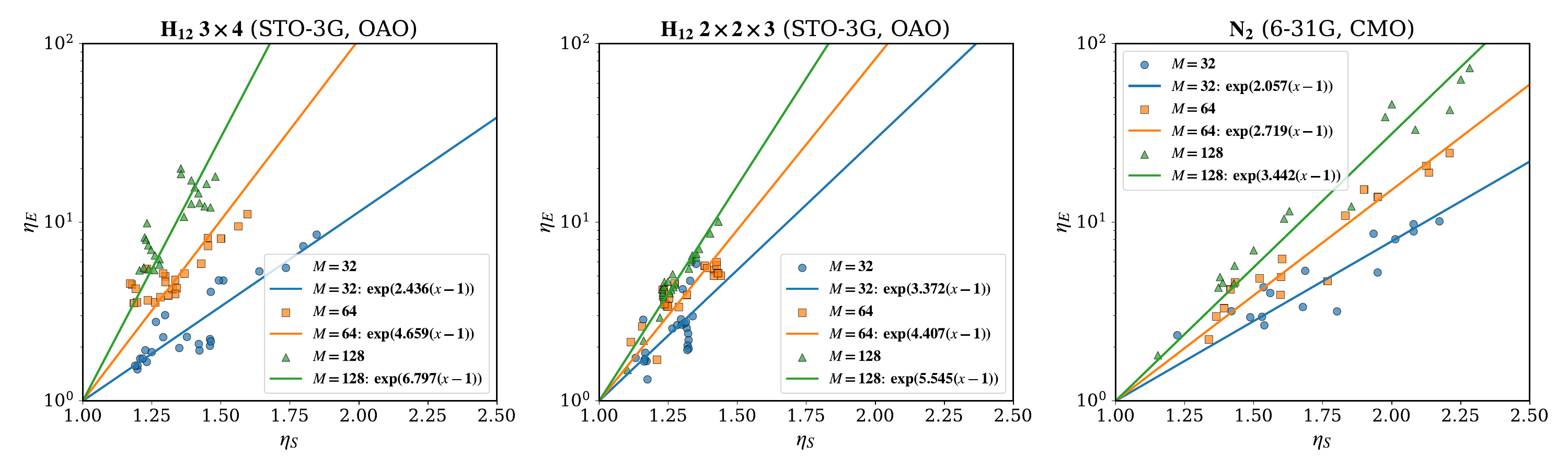}
    \caption{
    Correlation between the relative energy error ratio $\eta_E$ obtained from DMRG and CAMPS and the ratio of the maximum entanglement entropies $\eta_S$ for the MPS before and after Clifford disentangling step. Data for both JW and parity mappings in Figure \ref{fig: ratio_bonds} are used
    and fitted using $\eta_E=\exp[k(\eta_S-1)]$.     
    }
    \label{fig: ratio_scaling}
\end{figure*}

\section{Results and discussion}
\subsection{Effectiveness of Clifford disentanglers}
In this section, we assess the effectiveness of Clifford disentanglers. 
We characterized the performance from two complementary perspectives. First, we examine the change in the bipartite entanglement entropy across bond before and after disentangling, denoted as $S_{m,m+1}^{\mathrm{MPS}}$ and $S_{m,m+1}^{\mathrm{disen}}$, respectively, where the entropy is evaluated using the $\tfrac{1}{2}$-Rényi form. Second, we compare the ground-state energies obtained from DMRG calculations for the original and Clifford-transformed Hamiltonians, denoted by $E_{\mathrm{MPS}}$ and $E_{\mathrm{CAMPS}}$, respectively.
To facilitate a more quantitative comparison, we further employ two dimensionless measures: the relative energy error ratio $\eta_E$,\cite{huang_clifford_2025,huang_nonstabilizerness_2025} which characterizes the improvement in the ground-state energy, and the maximum-entropy ratio $\eta_S$, which characterizes the reduction in bipartite entanglement induced by Clifford disentangling. They are defined as
\begin{align}
    \eta_E &= \frac{E_{\mathrm{MPS}} - E_0}{E_{\mathrm{CAMPS}} - E_0}, \\
    \eta_S &= \frac{S_{\max}^{\mathrm{MPS}}}{S_{\max}^{\mathrm{disen}}},
\end{align}
where $E_0$ is the ground-state energy obtained from full configuration interaction (FCI), $E_{\mathrm{MPS}}$ and $E_{\mathrm{CAMPS}}$ are the DMRG ground-state energies before and after the Clifford transformation, respectively, and $S_{\max}^{\mathrm{MPS}}$ and $S_{\max}^{\mathrm{disen}}$ \cite{gxdn-zwrw} are the corresponding maximum values over all bonds.

We illustrate the above analysis using three representative systems: \ce{H12} in the STO-3G basis\cite{hehre1969self} with orthonormalized atomic orbitals (OAOs) in both the planar $3 \times 4$ and cubic $2\times 2 \times 3$ geometries, and \ce{N2} in the 6-31G basis\cite{hehre1972self} with canonical molecular orbitals (CMOs).
Figure \ref{fig: ratio_bonds} illustrates the dependence of $\eta_{E}$ (solid lines) and $\eta_S$ (dashed lines) on the bond length for different molecules under JW and parity mappings. The orbital ordering for each system is held fixed across bond lengths 
to reveal the trends solely due to the effects of Clifford disentanglers.
Under these conditions, we observe that large energy-error reductions are associated with larger entropy reductions. Thus, evaluating $\eta_S$ after Clifford disentangling
step provides a low-cost qualitative diagnostic for whether a subsequent
CAMPS optimization is likely to be beneficial.

Figure \ref{fig: ratio_scaling} summarizes the relationship between $\eta_E$ and $\eta_S$ with all the data in Figure \ref{fig: ratio_bonds}.
A clear linear correlation is observed between $\ln(\eta_E)$ and $\eta_S$, and the data can be well fitted by $\eta_E=\exp[k(\eta_S-1)]$. This behavior indicates that a reduction in the maximum entanglement entropy leads to an exponential improvement in the energy obtained from DMRG with a Clifford-transformed Hamiltonian. 
The slope $k$ seems to increase monotonically with the bond dimension $M$. 
It is worth mentioning that the slope $k$ also depends on other factors, such as the specific molecular systems and the orbital orderings. Nevertheless, 
this linear relationship supports the use of the entropy reduction as a practical
screening diagnostic before performing the more expensive CAMPS energy optimization.

\begin{figure}[!b]
    \centering
    \includegraphics[width=\columnwidth]{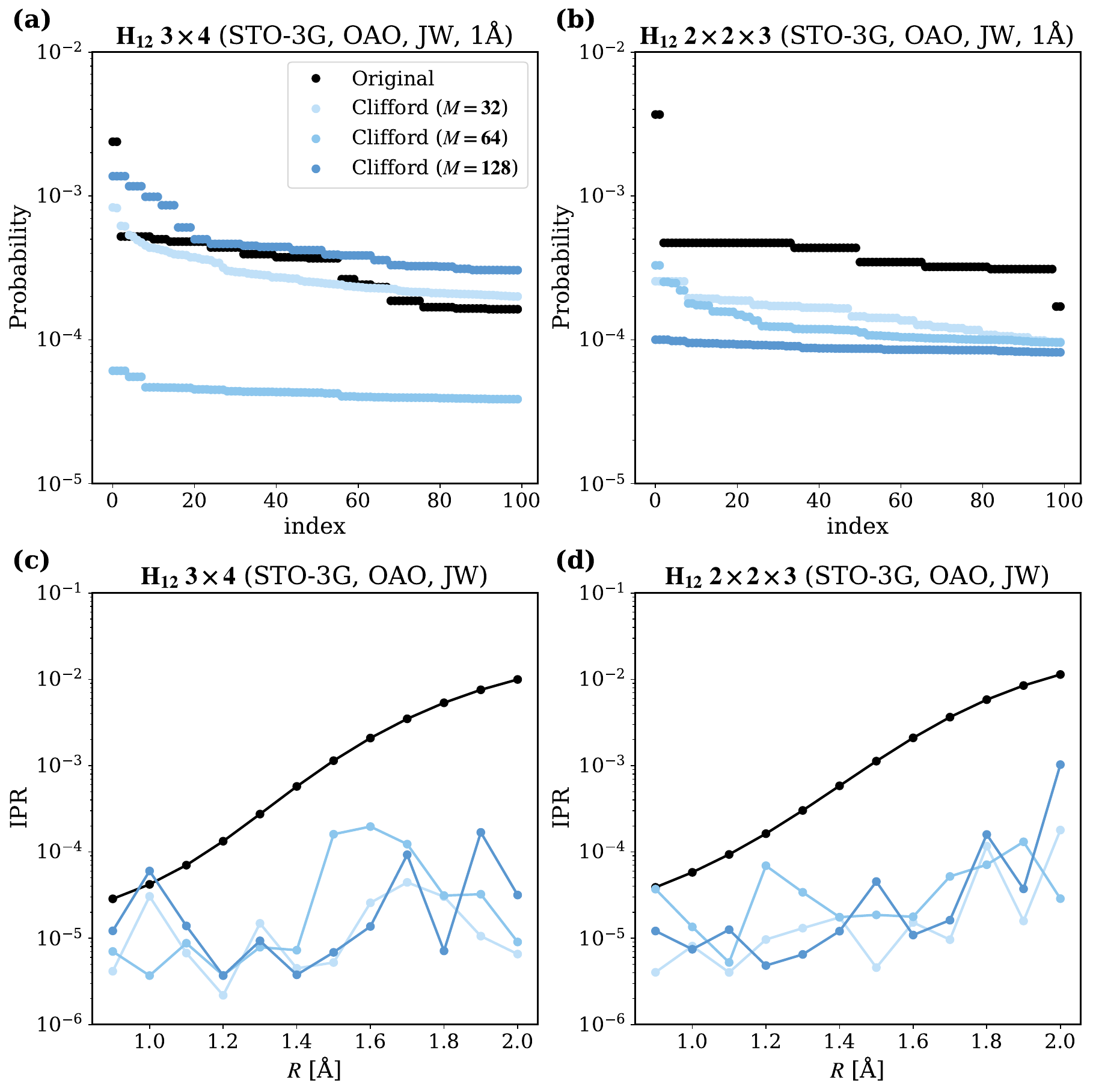}
    \caption{
Spreadness of the ground-state wavefunction for the original
Hamiltonian and the Clifford-transformed Hamiltonian.
    (a,b) Probability distributions of the exact ground-state wavefunction for the $\mathrm{H}_{12}$ system in the STO-3G basis (OAO, Jordan--Wigner mapping, bond length 1 {\AA}). Panels (a) and (b) correspond to the $3\times4$ and $2\times2\times3$ geometries, respectively. The probabilities of the 100 most significant configurations are plotted, with the Clifford disentanglers obtained from DMRG calculations using three values of $M$ (32, 64, and 128) for comparison.
 (c,d) Inverse participation ratio (IPR) of the exact ground state as a function of the bond length for the $3\times4$ and $2\times2\times3$ geometries, respectively.}
    \label{fig: IPR}
\end{figure}

\begin{figure*}[!t]
    \centering
    \includegraphics[width=0.96\textwidth]{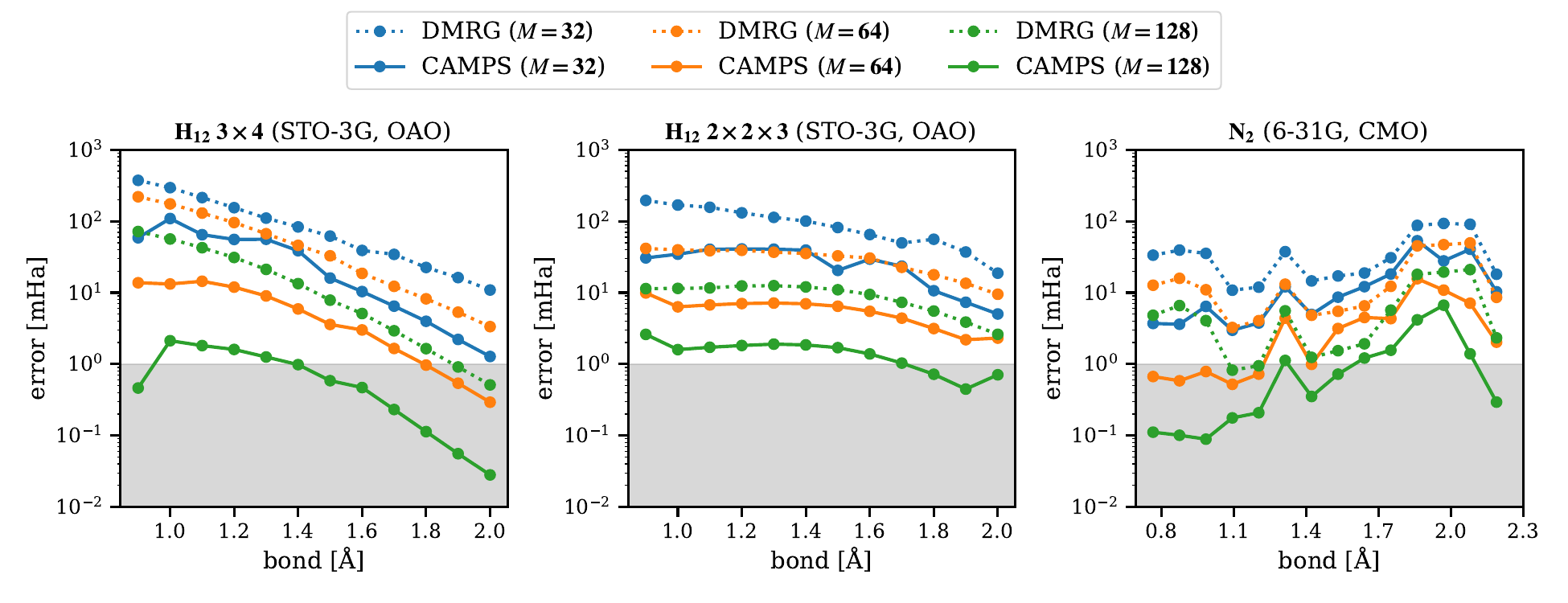}
    \caption{Energy errors relative to the FCI reference obtained from DMRG and CAMPS calculations with different bond dimensions $M$ using the Jordan--Wigner (JW) mapping. 
    }
    \label{fig: energy_error}
\end{figure*}

Finally, we mention that contrary to Givens rotations\cite{li2025entanglement}, the Clifford disentanglers do not increase the concentration of configurations in the exact ground state of the transformed Hamiltonian; instead, the distribution becomes slightly more delocalized. As shown in Figures \ref{fig: IPR}(a) and (b), the ground-state configuration probabilities of the Clifford-transformed \ce{H12} systems (planar $3\times 4$ and cubic $2\times 2\times 3$) decay more slowly and appear more evenly distributed. This trend is further confirmed by the inverse participation ratio (IPR), viz., $\mathrm{IPR}=\sum_{s_1\cdots s_K} |\langle s_1\cdots s_K|\Psi\rangle|^4$, 
shown in Figures \ref{fig: IPR}(c) and (d), which quantifies the configuration concentration and indicates that the ground-state configurations of the transformed Hamiltonian are generally more delocalized. This behavior originates from the structure of Clifford gates $\{\mathrm{CNOT}, \mathrm{S}, \mathrm{H}\}$. While CNOT and S gates preserve the IPR, the Hadamard gate will redistribute the weight of dominant configurations, thereby producing a more uniform configuration distribution. 

\subsection{Clifford disentanglers for DMRG}
In this section, we investigate the effect of incorporating the Clifford disentangler into the DMRG algorithm. The CAMPS workflow is summarized in Figure \ref{fig: Clifford_disentangler}(b). 
The preceding analysis held orbital orderings fixed.
In practical DMRG calculations, however, orbital ordering strongly
affects accuracy.\cite{chan2011density}
We therefore use the Fiedler orderings\cite{barcza2011quantum, olivares2015ab} for the baseline calculations in this section and separately test sensitivity to perturbed
orderings.

Figure \ref{fig: energy_error} compares 2Q CAMPS with conventional DMRG
for \ce{H12} in OAOs and \ce{N2} in CMOs, using the JW mapping.
Across the plotted geometries, CAMPS lowers the energy error at a fixed
bond dimension. With $M=128$, most CAMPS points fall within the chemical accuracy (1.6 mHa). The errors of CAMPS are broadly comparable to those from conventional DMRG with
bond dimension $2M$. These results establish the numerical benefits of Clifford
disentanglers for the systems studied.


\newsavebox{\swapCircuit}
\newsavebox{\swapCircuitDecomposed}
\newsavebox{\fswapCircuit}
\newsavebox{\fswapCircuitDecomposed}

\begin{figure*}[!t]
\centering

\begin{tikzpicture}
    \node[anchor=north west, inner sep=0pt] (img) at (0,0)
        {\includegraphics[width=0.94\textwidth]{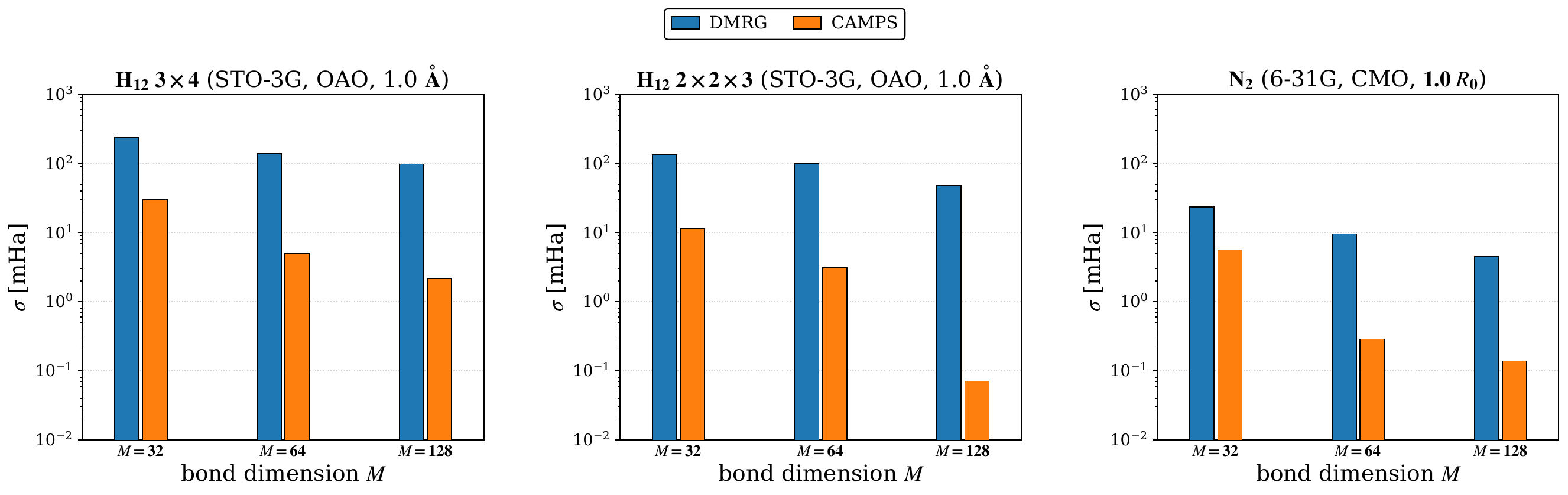}};
    \node[
        anchor=north west,
        inner sep=1pt,
        fill=white,
        fill opacity=0.8,
        text opacity=1
    ] at ([xshift=2mm,yshift=-2mm]img.north west)
        {(a)};
\end{tikzpicture}

\vspace{0.25cm}

\begin{lrbox}{\swapCircuit}
\begin{quantikz}[row sep=0.4cm, column sep=0.2cm]
    \lstick{$q_0$} & \swap{1} & \qw \\
    \lstick{$q_1$} & \targX{} & \qw
\end{quantikz}
\end{lrbox}

\begin{lrbox}{\swapCircuitDecomposed}
\begin{quantikz}[row sep=0.4cm, column sep=0.2cm]
    \lstick{$q_0$} & \ctrl{1} & \targ{}   & \ctrl{1} & \qw \\
    \lstick{$q_1$} & \targ{}  & \ctrl{-1} & \targ{}  & \qw
\end{quantikz}
\end{lrbox}

\begin{lrbox}{\fswapCircuit}
\begin{quantikz}[row sep=0.4cm, column sep=0.2cm]
    \lstick{$q_0$} & \gate[2]{\mathrm{FSWAP}} & \qw \\
    \lstick{$q_1$} &                            & \qw
\end{quantikz}
\end{lrbox}

\begin{lrbox}{\fswapCircuitDecomposed}
\begin{quantikz}[row sep=0.4cm, column sep=0.2cm]
    \lstick{$q_0$} & \swap{1} & \qw      & \ctrl{1} & \qw       & \qw \\
    \lstick{$q_1$} & \targX{} & \gate{H} & \targ{}  & \gate{H}  & \qw
\end{quantikz}
\end{lrbox}

\begin{tabular}{@{}c@{\hspace{1.6cm}}c@{}}

\begin{tabular}[t]{@{}l@{\hspace{0.35cm}}l@{}}
\raisebox{0.7em}{(b)} &
\usebox{\swapCircuit}
\hspace{0.2cm}=\hspace{0.2cm}
\usebox{\swapCircuitDecomposed}
\end{tabular}

&

\begin{tabular}[t]{@{}l@{\hspace{0.35cm}}l@{}}
\raisebox{0.7em}{(c)} &
\usebox{\fswapCircuit}
\hspace{0.2cm}=\hspace{0.2cm}
\usebox{\fswapCircuitDecomposed}
\end{tabular}

\end{tabular}

\caption{
Orbital ordering dependence of DMRG and CAMPS.
(a) Standard deviation of energy errors obtained with three orderings,
\(\sigma_X =
[\frac{1}{3}\sum_{i=1}^{3}
(\Delta E_X(i)-\overline{\Delta E_X})^2]^{1/2}\),
with \(X\in\{\mathrm{DMRG},\mathrm{CAMPS}\}\), using the Jordan--Wigner mapping,
relative to the FCI reference.
For \ce{N2}, the equilibrium bond length is \(R_0=1.095\)~\AA.
Three orbital orderings are considered: Fiedler ordering, a partial permutation
of the Fiedler ordering, and a random permutation.
(b,c) Circuit decompositions of the SWAP and fermionic SWAP (FSWAP) gates into
Clifford gates.
}
\label{fig: different_orders}
\end{figure*}

We next examine the sensitivity of CAMPS to orbital ordering.
In conventional DMRG, the ordering of orbitals along the MPS chain can
strongly affect accuracy because placing correlated orbitals close together
generally reduces the required bond dimension.
To assess the sensitivity of CAMPS to orbital ordering, we consider three representative orderings:
the Fiedler ordering obtained from the exchange integrals $K_{ij}=[ij|ji]$ (order 1),
a partial permutation of that ordering (order 2),
and a random permutation (order 3).
For method $X\in\{\mathrm{DMRG},\mathrm{CAMPS}\}$, we define $\Delta E_X(i)=E_X(i)-E_0$ and quantify ordering sensitivity by the standard deviation of the energy errors $\sigma_X=\left[\frac{1}{3}\sum_{i=1}^{3}\left(\Delta E_X(i)-\overline{\Delta E_X}\right)^2\right]^{1/2}$. Figure~\ref{fig: different_orders} shows that CAMPS gives smaller $\sigma_X$ than conventional DMRG for the tested \ce{H12} and \ce{N2} cases.
This reduced sensitivity can be rationalized by the structure of the Clifford disentanglers. The set of two-qubit Clifford gates contains both SWAP and fermionic SWAP (FSWAP) operations (see Figure \ref{fig: different_orders}),
allowing the optimization to partially rearrange orbitals while reducing entanglement. Consequently, CAMPS is less sensitive to orbital orderings.

\newsavebox{\jwParityCircuitA}
\newsavebox{\jwParityCircuitB}
\newsavebox{\jwParityCircuitBDecomposed}

\begin{figure*}[!t]
    \centering

\begin{lrbox}{\jwParityCircuitA}
\begin{quantikz}[row sep=0.4cm, column sep=0.2cm]
    \lstick{$q_0$} & \ctrl{1} & \qw      & \qw      & \qw \\
    \lstick{$q_1$} & \targ{}  & \ctrl{1} & \qw      & \qw \\
    \lstick{$q_2$} & \qw      & \targ{}  & \ctrl{1} & \qw \\
    \lstick{$q_3$} & \qw      & \qw      & \targ{}  & \qw
\end{quantikz}
\end{lrbox}

\begin{lrbox}{\jwParityCircuitB}
\begin{quantikz}[row sep=0.4cm, column sep=0.2cm]
    \lstick{$q_0$} & \ctrl{1} & \qw      & \qw      & \qw \\
    \lstick{$q_1$} & \targ{}  & \ctrl{2} & \qw      & \qw \\
    \lstick{$q_2$} & \qw      & \qw      & \ctrl{1} & \qw \\
    \lstick{$q_3$} & \qw      & \targ{}  & \targ{}  & \qw
\end{quantikz}
\end{lrbox}

\begin{lrbox}{\jwParityCircuitBDecomposed}
\begin{quantikz}[row sep=0.3cm, column sep=0.2cm]
    \lstick{$q_0$} & \ctrl{1} & \qw      & \qw      & \qw      & \qw      & \qw \\
    \lstick{$q_1$} & \targ{}  & \swap{1} & \qw      & \swap{1} & \qw      & \qw \\
    \lstick{$q_2$} & \qw      & \targX{} & \ctrl{1} & \targX{} & \ctrl{1} & \qw \\
    \lstick{$q_3$} & \qw      & \qw      & \targ{}  & \qw      & \targ{}  & \qw
\end{quantikz}
\end{lrbox}

\begin{tabular}{c}

\begin{tikzpicture}
    \node[anchor=north west, inner sep=0pt] (img) at (0,0)
        {\includegraphics[width=0.92\textwidth]{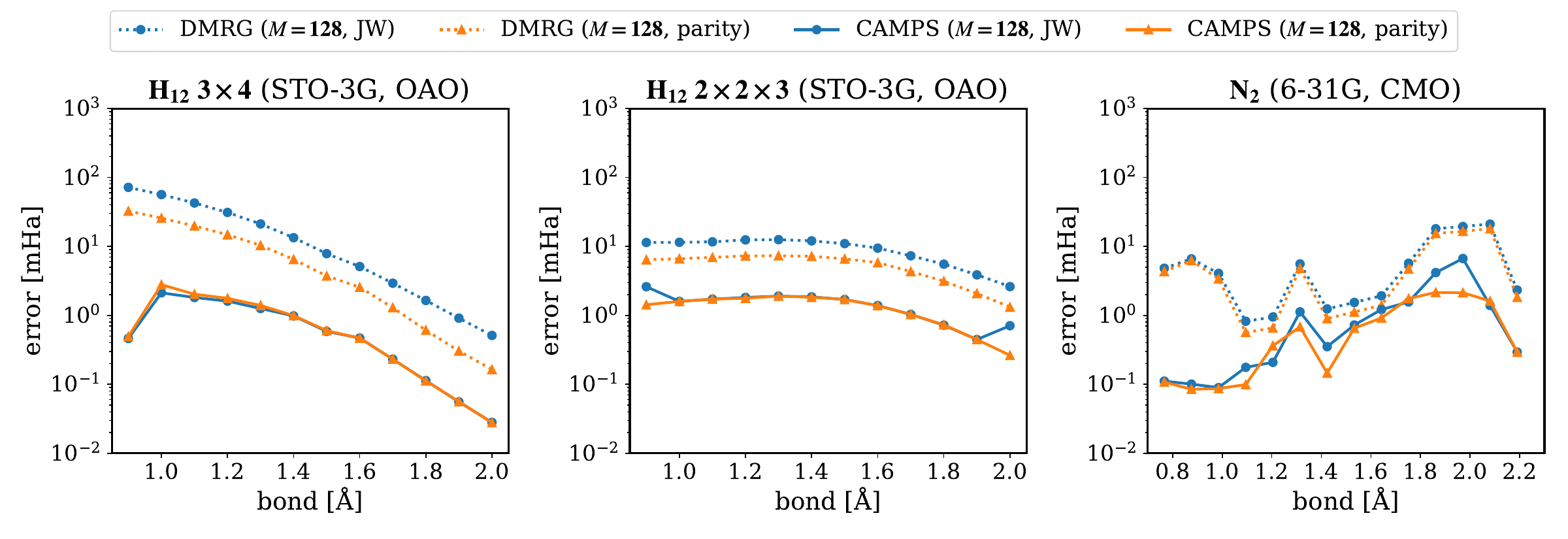}};
    \node[
        anchor=north west,
        inner sep=1pt,
        fill=white,
        fill opacity=0.8,
        text opacity=1
    ] at ([xshift=2mm,yshift=-2mm]img.north west)
        {(a)};
\end{tikzpicture}
\\[0.2cm]

\begin{tabular}{@{}c@{\hspace{1.6cm}}c@{}}

\begin{tabular}[t]{@{}l@{\hspace{0.35cm}}l@{}}
\raisebox{1.8em}{(b)} &
\usebox{\jwParityCircuitA}
\end{tabular}

&

\begin{tabular}[t]{@{}l@{\hspace{0.35cm}}l@{}}
\raisebox{1.8em}{(c)} &
\usebox{\jwParityCircuitB}
\hspace{0.2cm}=\hspace{0.2cm}
\usebox{\jwParityCircuitBDecomposed}
\end{tabular}

\end{tabular}
\end{tabular}

    \caption{
    Fermion-to-qubit mapping dependence of DMRG and CAMPS.
    (a) Energy errors with respect to full configuration interaction (FCI) as functions of bond length obtained from DMRG and CAMPS calculations with a fixed bond dimension $M=128$ under two different fermion-to-qubit mappings (Jordan--Wigner and parity).
    (b,c) Circuit representations of basis transformations.
    (b) Jordan--Wigner to parity encoding.
    (c) Jordan--Wigner to Bravyi--Kitaev encoding.
    }
    \label{fig: diff_mappings}
\end{figure*}

\begin{figure*}[!t]
    \centering
    \includegraphics[width=0.90\textwidth]{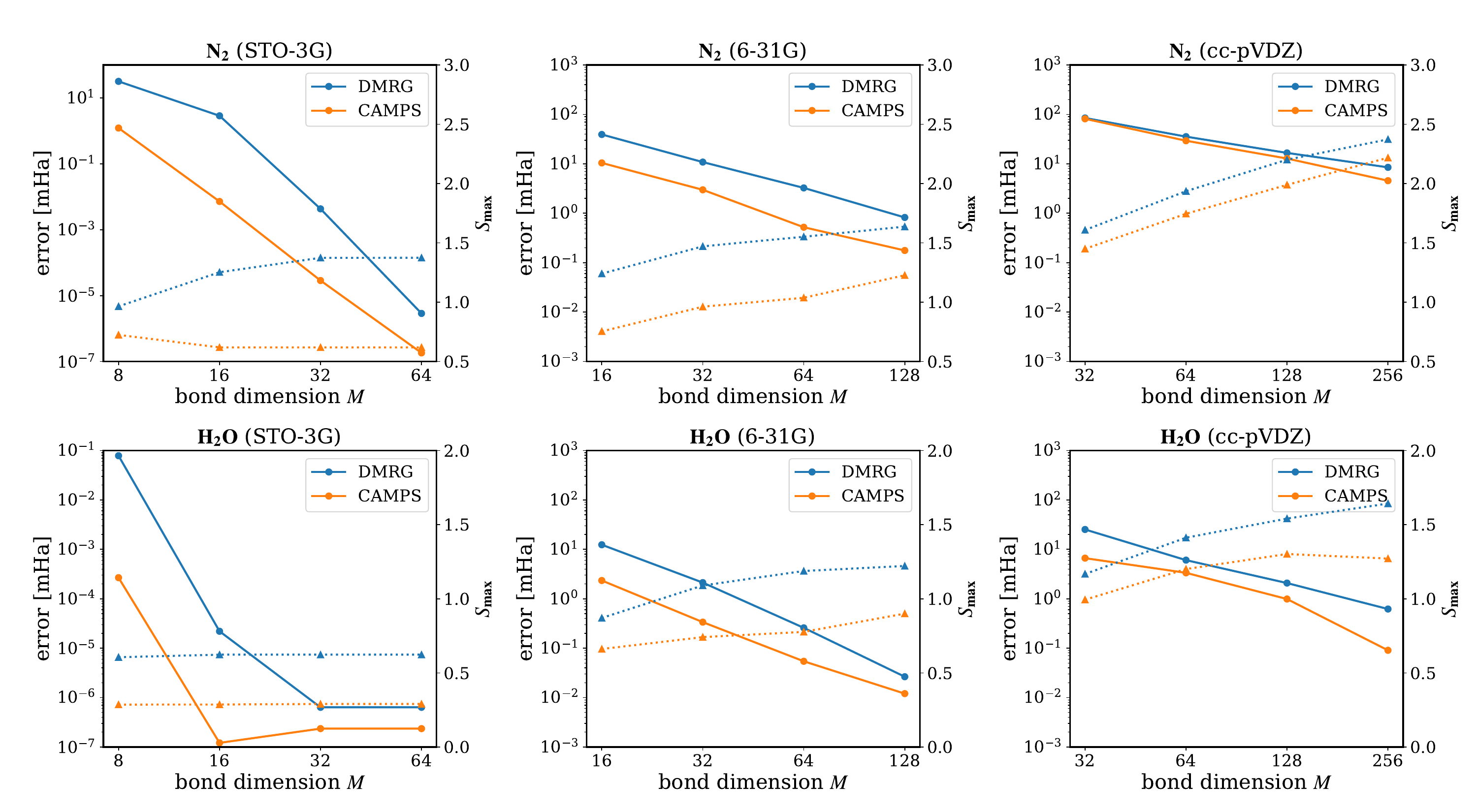}
    \caption{
	Basis set dependence of DMRG and CAMPS.    
    Energy errors (solid lines) and maximal bipartite $1/2$-R\'enyi entropies (dashed lines) over all MPS bonds as functions of bond dimension $M$ for \ce{N2} and \ce{H2O} at their equilibrium geometries. The reference energies were obtained from FCI calculations for all cases except \ce{N2} (cc-pVDZ), for which a nearly exact spin-adapted DMRG result\cite{wang2025generalized} with $M=5000$ was used.}
    \label{fig: different_basis}
\end{figure*}

At fixed orbital ordering, we next examine the impact of the Jordan--Wigner and parity mappings on CAMPS (Figure~\ref{fig: diff_mappings}). For \ce{H12} in the OAO representation, CAMPS results from the two mappings are very similar. For \ce{N2} in the CMO representation, a larger mapping dependence is observed with local 2Q disentanglers. Since transformations between Jordan--Wigner, parity, and Bravyi--Kitaev encodings can be expressed as Clifford circuits as illustrated for four qubits in Figure~\ref{fig: diff_mappings}, we can expect that, if $n$Q Clifford gates are employed for disentangling instead of local 2Q Clifford gates, quantum states obtained using these mapping can converge to the same CAMPS solution.
Therefore, the observed dependence on mappings only reflects the limitation of the local search.

\begin{figure*}[!t]
    \centering
    \includegraphics[width=0.80\textwidth]{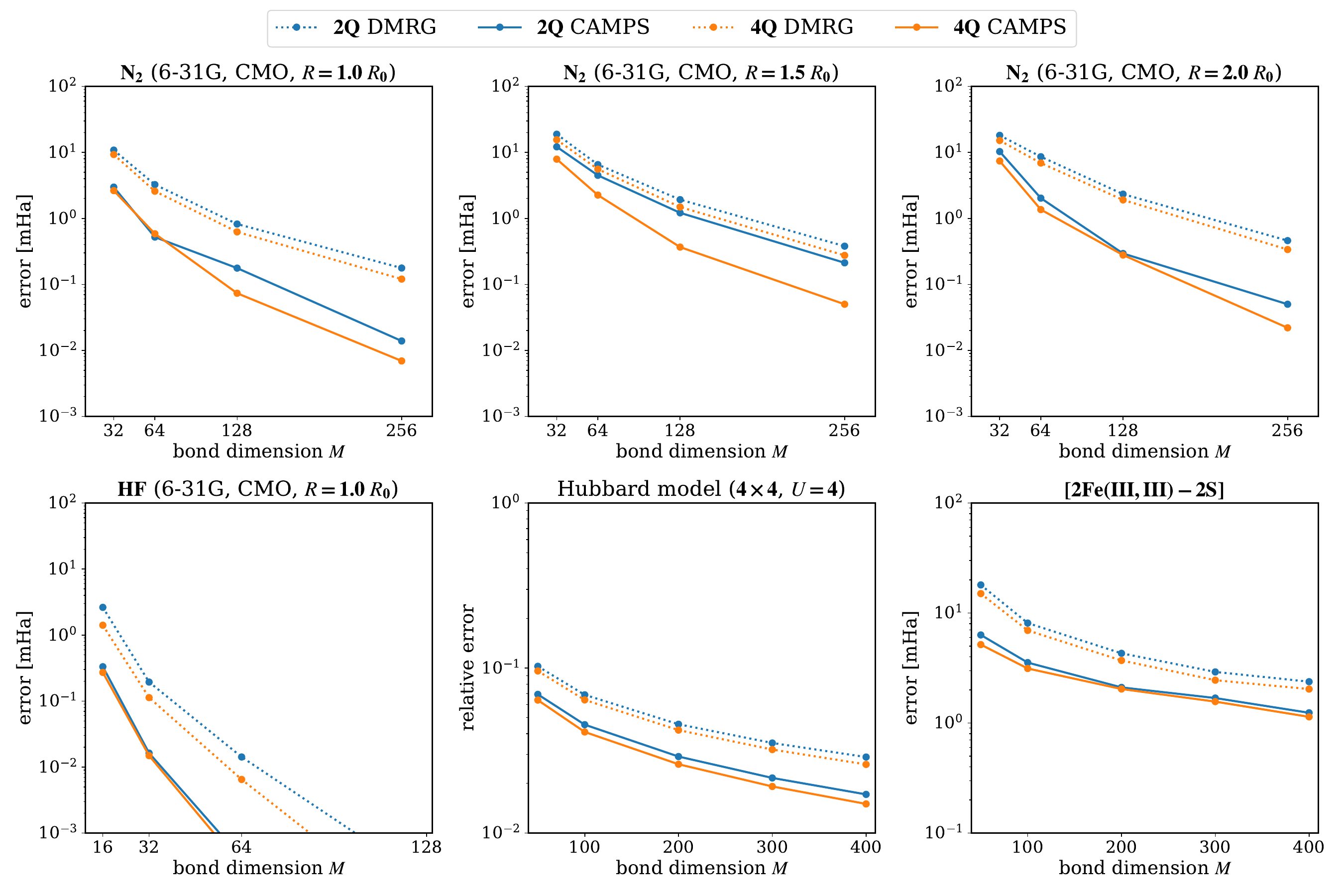}
    \caption{
	Comparison between 2Q and 4Q DMRG and CAMPS. 	   
    Energy errors of DMRG and CAMPS as a function of bond dimension $M$ 
    for the ground-state energies of various systems, including \ce{N2} at different bond lengths (1.0$R_0$, 1.5$R_0$, and 2.0$R_0$ with $R_0=1.095$\AA), \ce{HF} at its equilibrium bond length $R_0=0.9168$\AA, 
    the half-filled $4\times4$ Hubbard model ($U=4$), and the [{2Fe(III,III)-2S}] cluster. The reference ground-state energies are taken from full configuration interaction (FCI) calculations. 2Q CAMPS denotes calculations performed in the spin-orbital basis with 2Q Clifford disentanglers, whereas 4Q CAMPS denotes calculations in the spatial-orbital basis with 4Q Clifford disentanglers.
    }
    \label{fig: 2q_vs_4q}
\end{figure*}

We further examine basis-set dependence in Figure~\ref{fig: different_basis}. For both \ce{N2} and \ce{H2O}, the improvement of CAMPS relative to conventional DMRG decreases as the basis is enlarged in the calculations shown. 
The rate of this reduction depends strongly on the character of correlation in the system. The reduction is modest for \ce{H2O} from 6-31G to cc-pVDZ, indicating that the additional entanglement introduced by the larger basis set does not substantially alter the entanglement structure relevant to Clifford disentangling for \ce{H2O}. In contrast, enlarging the basis set for \ce{N2} introduces a greater amount of dynamic correlation, which creates a larger amount of entanglement  that is more difficult to remove using Clifford transformations. This is reflected by the smaller ratio between $S_{\max}$ before and after Clifford
disentangling step, see Figure~\ref{fig: different_basis}.
    
Finally, we compare the effects of 2Q and 4Q Clifford disentanglers for CAMPS. 
The 2Q DMRG/CAMPS calculations use a spin-orbital representation, whereas the 4Q DMRG/CAMPS calculations use a spatial-orbital MPS. 
As shown in Figure~\ref{fig: 2q_vs_4q}, 
except for \ce{N2} at $1.5R_0$ with $R_0=1.095$\AA,
4Q disentanglers offer no substantial improvement over 2Q disentanglers for the displayed \ce{N2}, \ce{HF}, half-filled 4-by-4 Hubbard-model ($U=4$), and the CAS(30e,20o) active space model\cite{li2017spin,linkToFCIDUMPfe2fe4} of a [2Fe(III,III)--2S] cluster, while the computational cost increases by several orders of magnitude.
This behavior can be rationalized from the structure of the Clifford group, where 4Q Clifford operators can be formed from products of 2Q Clifford gates.
As a result, the possible advantage of the 4Q scheme seems to be limited, 
as 2Q formulation benefits from more local disentangling operations 
and a substantially lower computational cost.

\begin{figure*}[!tbp]
    \centering
    \includegraphics[width=0.90\textwidth]{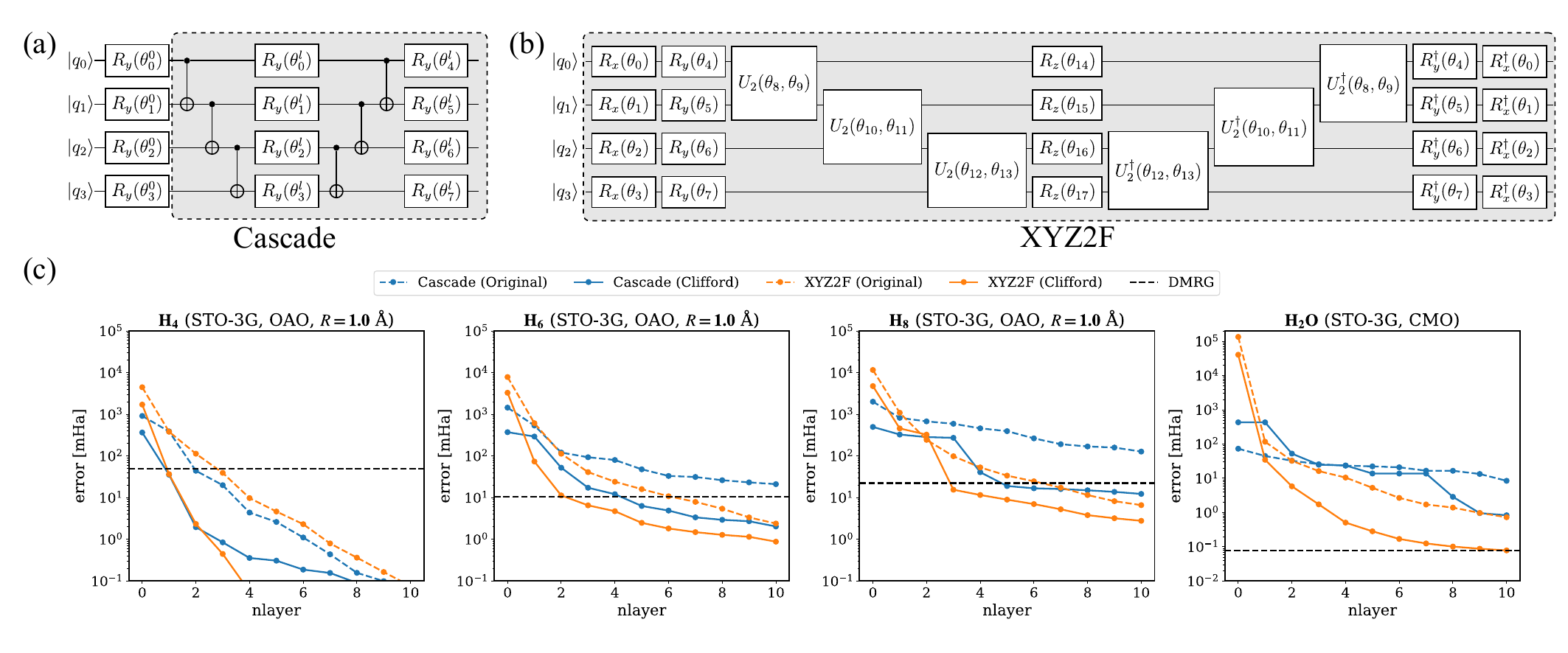}
    \caption{
	Clifford disentanglers for VQE.    
    (a) Cascade ansatz; 
    (b) XYZ2F ansatz;
    (c) Energy errors obtained from VQE as functions of the number of layers in the Cascade and XYZ2F circuits. 
    For \(\mathrm{H}_4\), the Clifford transformed Hamiltonian is generated using an MPS with bond dimension $M=4$, while for the remaining systems $M=8$ is used. The energies obtained from the corresponding MPS are also illustrated by black dashed lines.}
    \label{fig: VQE}
\end{figure*}

\subsection{Clifford disentanglers for VQE}
To illustrate the usefulness of Clifford disentanglers
for quantum simulations,
we also perform VQE calculations using both the original qubit Hamiltonian 
with the Jordan--Wigner mapping and the Clifford-transformed
Hamiltonian obtained with Clifford disentanglers determined from a low-bond-dimension MPS. 
The Cascade\cite{CASCADE} and XYZ2F\cite{XYZ2F} circuits are employed as variational ans\"atze with $\ket{00\cdots 0}$ as the initial state. The circuits are shown in Figure~\ref{fig: VQE}, and a layerwise optimization strategy\cite{XYZ2F} is adopted.

As shown in Figure \ref{fig: VQE}, VQE with the Clifford-transformed Hamiltonian yields substantially more accurate results than VQE with the original Hamiltonian. 
Moreover, at relatively small circuit depths, the VQE energies obtained
with the Clifford-transformed Hamiltonian
already surpass those obtained from the MPS used to generate the corresponding Hamiltonian. These tests show that Clifford disentanglers can provide useful Hamiltonian preprocessing for quantum simulations. 
By transferring part of the ground-state entanglement into a classically optimized Clifford transformation, the resulting ground state of the Clifford-transformed Hamiltonian can be represented more efficiently by shallower circuits.


\section{Conclusion}
We systematically assess the use of Clifford disentanglers for reducing entanglement 
in electronic structure simulations. In particular, the use of
low-bond-dimension MPS to determine Clifford disentanglers
and the subsequent application of Clifford disentanglers
to CAMPS and VQE are studied. Compared to ordinary DMRG, we find the CAMPS algorithm exhibits several useful features:

\begin{enumerate}
    \item \textit{Improved accuracy.}
    For the systems tested, CAMPS reduces the energy error at fixed bond dimension and can achieve an accuracy comparable to conventional DMRG calculations with a larger bond dimension.

    \item \textit{Entropy-based diagnostic.}
    The reduction of the maximum bipartite $1/2$-R\'enyi entropy is correlated with the improved energies, providing a simple empirical criterion for estimating the benefit of Clifford disentanglers.

    \item \textit{Reduced sensitivity to orbital ordering and fermion-to-qubit mappings.}
    CAMPS is less sensitive than conventional DMRG to orbital ordering and can partially mitigate the dependence on the fermion-to-qubit mapping, as swaps and mappings can all be expressed as Clifford circuits.

    \item \textit{Correlation-dependent performance.}
    The observed benefit of Clifford disentanglers decreases as the basis-set size increases, especially for \ce{N2}. 
    Presumably, dynamical correlation introduces entanglement that cannot be effectively
    removed by Clifford circuits. Further investigations on more systems are required to generalize this trend.

\end{enumerate}

An important contribution of this work is the efficient classification of Clifford operators based on the symplectic representation.
By grouping Clifford operators that are equivalent with respect to the entanglement spectrum across a bipartition, the search space is reduced from 720 to 20 classes for 2Q Clifford operators and from 47377612800 to 91392 classes for 4Q Clifford operators.
This reduction makes the search for practical Clifford disentanglers computationally feasible.
Preliminary numerical tests show that in most cases, 4Q Clifford disentanglers did not yield significant additional improvement over 2Q Clifford disentanglers.
However, the classification is still valuable for future investigation with
4Q Clifford disentanglers. For instance, one can sample a few representatives
or use machine learning to predict important representatives, which can
significantly lower the cost of using 4Q Clifford disentanglers.

The computational scaling of CAMPS is formally the same as DMRG. However, due to the loss of U(1) symmetry, it is more expensive in practice. Future work needs to investigate the possibility of restoring U(1) symmetry, e.g., by only considering
U(1) preserving Clifford disentanglers.

Apart from CAMPS, we also demonstrate that the use of Clifford-transformed Hamiltonians can also improve VQE calculations by transferring part of the many-body entanglement into a classically optimized Clifford transformation. In summary, the results in the present work suggest that Clifford disentanglers, complementary to Givens rotations,\cite{li2025entanglement,huang2026augmenting} can serve as useful structure-preserving preprocessing tools for both tensor-network and quantum simulation of electronic structure problems. Combining these transformations with tensor-network algorithms and shallow quantum circuits may provide a practical route toward lower-complexity simulations of strongly correlated molecular systems.

\FloatBarrier

\appendix
\setcounter{equation}{0}
\renewcommand{\theequation}{A\arabic{equation}}

\section*{Appendix}

\subsection*{Hash-Based Classification of Clifford Operators}

This section describes an efficient hash-based classification algorithm for $n$-qubit Clifford operators (see Algorithm \ref{algo: hashing_algo}) for $n=2$ or 4.
Because the binary symplectic representation retains only the phase-free
Pauli action, the classification is performed for Clifford operators modulo
Pauli factors and global phases, rather than for all Clifford unitaries.

Then, we show that two Clifford operators belong to the same equivalence class if and only if the corresponding symplectic matrices $T_1$, $T_2$ and $T_3$ are identical. 
Therefore, guided by the hash value, we only need to compare each operator with operators that have the same hash value. The complete classification procedure is shown in Algorithm \ref{algo: classification}.
\begin{algorithm}[H]
\caption{Hashing the symplectic matrix of a 2-qubit or 4-qubit Clifford operator to an integer}
\begin{algorithmic}[1]

\Require 
    \Statex \quad a binary $2n\times 2n$ symplectic matrix $S$, where $n = 2$ or 4.

\Ensure 
    \Statex \quad a 64-bit integer hash value $H$

\State write $S$ as $\begin{pmatrix}
    A & B \\
    C & D \\
\end{pmatrix}$, where $A$, $B$, $C$ and $D$ are $n\times n$ binary matrices, $Q\equiv \bigoplus_{i=1}^{\frac{n}{2}} X$
\State $T_1 \leftarrow A Q A^T \, \text{(mod 2)}$
\State $T_2 \leftarrow A Q C^T \, \text{(mod 2)}$
\State $T_3 \leftarrow C Q C^T \, \text{(mod 2)}$
\State pack the $3n^2$ elements (bits) of $T_1$, $T_2$ and $T_3$ to an integer $H$
\end{algorithmic}
\label{algo: hashing_algo}
\end{algorithm}

\begin{algorithm}[H]
\caption{Classifying all of the 2-qubit or 4-qubit Clifford operators}
\begin{algorithmic}[1]

\Require Number of qubits $n\in\{2,4\}$; Clifford operators $\{S_i\}$ represented as binary symplectic matrices (generated on the fly).
\Ensure Number of classes $N_{\rm class}$; class lists $\{L_j\}_{j=1}^{N_{\rm class}}$.

\State $N \leftarrow 2^{n^2} \prod_{j=1}^n (4^j - 1)$ 
\State initialize $N_{\rm class}$ to 0, class lists $L_j$ as empty lists, and $D$ as an empty dictionary.

\For{$i=1$ to $N$}
	\State hash $S_i$ to an integer $H_i$ using Algorithm \ref{algo: hashing_algo} 
    \If{ $H_i$ is in $D$ } 
        \State $j \leftarrow D(H_i)$
        \State let $T$ be an arbitrary matrix in $L_j$
        \State append $S_i$ to the end of $L_j$
    \Else
        \State $N_{\rm class} \leftarrow N_{\rm class} + 1$
        \State $D(H_i) \leftarrow N_{\rm class}$
        \State append $S_i$ to the end of $L_{N_{\rm class}}$
    \EndIf
\EndFor
\end{algorithmic}
\label{algo: classification}
\end{algorithm}

\subsection*{Equivalence-Class Criterion}

All matrix operations in this section are over $\mathbb{Z}_2$.
We assume that $n$ is even; in the applications considered here, $n=2$ or
$n=4$. We use the binary symplectic convention
\begin{equation}
    \begin{gathered}
        S\Omega S^T = \Omega,
        \qquad
        \Omega =
        \begin{pmatrix}
            Q & 0 \\
            0 & Q
        \end{pmatrix},\\
        Q \equiv \bigoplus_{i=1}^{n/2}
        \begin{pmatrix}
            0 & 1 \\
            1 & 0
        \end{pmatrix}.
    \end{gathered}
\end{equation}
The matrix $Q$ is symmetric and satisfies $Q^2=I$. For an $n$-qubit Clifford
action, write
\begin{equation}
    \begin{gathered}
        S =
        \begin{pmatrix}
            A & B \\
            C & D
        \end{pmatrix}
        =
        \begin{pmatrix}
            M & N
        \end{pmatrix},\\
        M =
        \begin{pmatrix}
            A \\
            C
        \end{pmatrix},
        \qquad
        N =
        \begin{pmatrix}
            B \\
            D
        \end{pmatrix}.
    \end{gathered}
\end{equation}
where $A,B,C,D$ are $n\times n$ blocks, and $M$ and $N$ are both
$2n\times n$ block-column matrices. The three hash matrices are
\begin{equation}
    \begin{gathered}
        T_1(S)=AQA^T,\\
        T_2(S)=AQC^T,\\
        T_3(S)=CQC^T.
    \end{gathered}
\end{equation}

\begin{lem}[Block identities]
If $S$ is symplectic, then
\begin{align}
    A Q A^T + B Q B^T &= Q, \label{eq:complete-row-1}\\
    C Q C^T + D Q D^T &= Q, \\
    A Q C^T + B Q D^T &= 0, \label{eq:complete-row-3}\\
    A^T Q A + C^T Q C &= Q, \label{eq:complete-col-1}\\
    B^T Q B + D^T Q D &= Q, \\
    A^T Q B + C^T Q D &= 0. \label{eq:complete-col-3}
\end{align}
Equivalently, in terms of $M$ and $N$,
\begin{equation}
    \begin{gathered}
        M^T\Omega M=Q,\qquad
        N^T\Omega N=Q,\\
        M^T\Omega N=0.
    \end{gathered}
    \label{eq:complete-MN-identities}
\end{equation}
\end{lem}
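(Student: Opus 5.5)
The plan is to read off the row identities directly from the defining relation $S\Omega S^T=\Omega$ by block multiplication. I will then get the column identities by showing that $S^T$ is also symplectic, i.e.\ $S^T\Omega S=\Omega$. Everything is over $\mathbb{Z}_2$, so signs never matter and $-X=X$.

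\textbf{Row identities.} With $\Omega=\mathrm{diag}(Q,Q)$ and $S=\begin{pmatrix}A&B\\C&D\end{pmatrix}$, I compute
\begin{equation*}
S\Omega S^T=\begin{pmatrix} AQA^T+BQB^T & AQC^T+BQD^T\\ CQA^T+DQB^T & CQC^T+DQD^T\end{pmatrix}.
\end{equation*}
Comparing with $\Omega=\mathrm{diag}(Q,Q)$ block by block gives the first three identities, Eqs.~\eqref{eq:complete-row-1}--\eqref{eq:complete-row-3}. The lower-left block gives nothing new: it is the transpose of the upper-right block, because $Q^T=Q$.

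\textbf{Column identities.} Since $Q^2=I$, we have $\Omega^2=I$, so $\Omega^{-1}=\Omega$. Multiplying $S\Omega S^T=\Omega$ on the right by $\Omega$ gives $S(\Omega S^T\Omega)=I$. Hence $S$ is invertible with $S^{-1}=\Omega S^T\Omega$. Because a one-sided inverse of a square matrix is two-sided, we also have $\Omega S^T\Omega S=I$. Multiplying on the left by $\Omega$ then yields $S^T\Omega S=\Omega$. Expanding this product in blocks gives
\begin{equation*}
S^T\Omega S=\begin{pmatrix} A^TQA+C^TQC & A^TQB+C^TQD\\ B^TQA+D^TQC & B^TQB+D^TQD\end{pmatrix}.
\end{equation*}
Equating the blocks with those of $\Omega$ gives Eqs.~\eqref{eq:complete-col-1}--\eqref{eq:complete-col-3}.

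\textbf{$M$, $N$ form.} Since $M=\begin{pmatrix}A\\C\end{pmatrix}$ and $N=\begin{pmatrix}B\\D\end{pmatrix}$, we have $M^T\Omega M=A^TQA+C^TQC$, $N^T\Omega N=B^TQB+D^TQD$ and $M^T\Omega N=A^TQB+C^TQD$. These are exactly the three column identities, which proves Eq.~\eqref{eq:complete-MN-identities}.

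The only step needing any care is passing from $S\Omega S^T=\Omega$ to $S^T\Omega S=\Omega$. It relies on $\Omega^{-1}=\Omega$ and on the two-sidedness of inverses for square matrices over a field; both are immediate.
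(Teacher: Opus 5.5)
Your proof is correct and follows the paper's argument. Like the paper, you expand $S\Omega S^T=\Omega$ blockwise to get the row identities, pass to $S^T\Omega S=\Omega$ using $\Omega^2=I$ and invertibility of $S$, and expand again to get the column identities (equivalently the $M$, $N$ identities); the only difference is that you derive invertibility explicitly via $S^{-1}=\Omega S^T\Omega$, whereas the paper simply invokes it.
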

\begin{proof}
Expanding $S\Omega S^T=\Omega$ gives
\begin{align}
    \begin{pmatrix}
        A & B \\
        C & D
    \end{pmatrix}
    \begin{pmatrix}
        Q & 0 \\
        0 & Q
    \end{pmatrix}
    \begin{pmatrix}
        A^T & C^T \\
        B^T & D^T
    \end{pmatrix}
    =
    \begin{pmatrix}
        Q & 0 \\
        0 & Q
    \end{pmatrix},
\end{align}
which gives Eqs.~\eqref{eq:complete-row-1}--\eqref{eq:complete-row-3}.
Since $S$ is invertible and $\Omega^2=I$, the condition $S\Omega S^T=\Omega$
is equivalent to $S^T\Omega S=\Omega$. Expanding $S^T\Omega S=\Omega$ gives
Eqs.~\eqref{eq:complete-col-1}--\eqref{eq:complete-col-3}, or equivalently
Eq.~\eqref{eq:complete-MN-identities}.
\end{proof}

\begin{lem}[Column-space characterization]
For a symplectic matrix $S=(M\;N)$, the matrix
\begin{align}
    G(S)\equiv M Q M^T
    =
    \begin{pmatrix}
        AQA^T & AQC^T \\
        CQA^T & CQC^T
    \end{pmatrix}
\end{align}
has the same column space as $M$.
Here the lower-left block satisfies $CQA^T=(AQC^T)^T$, since $Q^T=Q$.
\end{lem}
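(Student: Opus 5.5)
Since $G(S)=M(QM^T)$, every column of $G(S)$ is a linear combination of the columns of $M$, so $\operatorname{col}(G(S))\subseteq\operatorname{col}(M)$ holds trivially over $\mathbb{Z}_2$. The work is the reverse inclusion. We aim to show that $M$ can be written as $G(S)$ times some matrix, using the Block identities lemma, specifically $M^T\Omega M=Q$ from Eq.~\eqref{eq:complete-MN-identities}.

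The key step is to multiply $G(S)$ on the right by $\Omega M$:
\begin{equation}
    G(S)\,\Omega M = M Q (M^T\Omega M) = M Q Q = M,
\end{equation}
where we used $M^T\Omega M=Q$ and $Q^2=I$. Hence every column of $M$ is a combination of the columns of $G(S)$, giving $\operatorname{col}(M)\subseteq\operatorname{col}(G(S))$, and the two column spaces coincide.

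The remaining remark about the block structure is immediate. Expanding $MQM^T$ with $M=\begin{pmatrix}A\\ C\end{pmatrix}$ gives the four blocks $AQA^T$, $AQC^T$, $CQA^T$, $CQC^T$. Transposing $AQC^T$ and using $Q^T=Q$ yields $CQA^T$. As a consequence, $G(S)$ is determined entirely by the hash matrices $T_1,T_2,T_3$.

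There is no real obstacle. The only point requiring care is to invoke the column form of the symplectic identities, $S^T\Omega S=\Omega$, rather than the row form. This is exactly why the Block identities lemma established the equivalence of the two forms.
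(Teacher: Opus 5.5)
Your proof is correct and is essentially the paper's own argument: the inclusion $\operatorname{col}(G(S))\subseteq\operatorname{col}(M)$ is immediate from $G(S)=M(QM^T)$. The reverse inclusion follows from $G(S)\Omega M = MQ(M^T\Omega M)=MQ^2=M$, using the column-form identity $M^T\Omega M=Q$ from the block-identities lemma, exactly as the paper does.
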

\begin{proof}
Every column of $G(S)=MQM^T$ lies in the column space of $M$. Conversely, by
Eq.~\eqref{eq:complete-MN-identities},
\begin{align}
    G(S)\Omega M
    =
    M Q M^T\Omega M
    =
    M Q Q
    =
    M.
\end{align}
Thus every column of $M$ lies in the column space of $G(S)$. Therefore $G(S)$
and $M$ have the same column space.
\end{proof}

\begin{thm}[Equivalence-class criterion]
Let $S_1$ and $S_2$ be two $n$-qubit Clifford symplectic matrices. Then
$S_1$ and $S_2$ belong to the same equivalence class,
\begin{align}
    S_2^{-1}S_1 =
    \begin{pmatrix}
        s & 0 \\
        0 & s'
    \end{pmatrix},
\end{align}
if and only if
\begin{align}
    T_1(S_1)&=T_1(S_2),\\
    T_2(S_1)&=T_2(S_2),\\
    T_3(S_1)&=T_3(S_2).
\end{align}
\end{thm}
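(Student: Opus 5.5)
The plan is to reduce block-diagonality of $S_2^{-1}S_1$ to two $\Omega$-orthogonality conditions between the column blocks $M_i,N_i$ of $S_i=(M_i\;N_i)$. Then I would show that equality of the hash matrices is equivalent to $M_1$ and $M_2$ spanning the same column space, using the Column-space characterization lemma.

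\emph{Reformulation.} Since $S_2\Omega S_2^T=\Omega$ and $\Omega^2=I$, we have $S_2^{-1}=\Omega S_2^T\Omega$. Hence
\begin{equation*}
S_2^{-1}S_1=\Omega\begin{pmatrix} M_2^T\Omega M_1 & M_2^T\Omega N_1\\ N_2^T\Omega M_1 & N_2^T\Omega N_1\end{pmatrix}.
\end{equation*}
Because $\Omega=\mathrm{diag}(Q,Q)$ with $Q$ invertible, left multiplication by $\Omega$ preserves the zero pattern of the off-diagonal blocks. Therefore $S_2^{-1}S_1$ is block diagonal if and only if
\begin{equation*}
N_2^T\Omega M_1=0 \quad\text{and}\quad M_2^T\Omega N_1=0.
\end{equation*}
Moreover, $T_1,T_2,T_3$ are exactly the blocks of $G(S)=MQM^T$; the fourth block $CQA^T=T_2^T$ is determined by $T_2$. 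So the hash condition is equivalent to $G(S_1)=G(S_2)$.

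\emph{Forward direction.} Suppose $S_1=S_2F$ with $F=\mathrm{diag}(s,s')$. Then $M_1=M_2s$. Since $F$ is symplectic, the relation $F\Omega F^T=\Omega$ gives $sQs^T=Q$. Hence
\begin{equation*}
G(S_1)=M_2\,sQs^T M_2^T=M_2QM_2^T=G(S_2).
\end{equation*}

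\emph{Converse.} Assume $G(S_1)=G(S_2)$. By the Column-space characterization lemma, $\mathrm{col}(M_1)=\mathrm{col}(G(S_1))=\mathrm{col}(G(S_2))=\mathrm{col}(M_2)$. Both $M_i$ have full column rank $n$ because $S_i$ is invertible, so $M_1=M_2 s$ for some invertible $n\times n$ matrix $s$. The Block identities lemma gives $N_2^T\Omega M_2=0$, and therefore $N_2^T\Omega M_1=N_2^T\Omega M_2 s=0$. This is the first condition.

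The second condition is the step I expect to need the most care, since it concerns the $N$ blocks, about which the hash says nothing directly. I would handle it through a symplectic-complement argument. Because $\Omega$ is nondegenerate on $\mathbb Z_2^{2n}$, the space $W=\{v: M_1^T\Omega v=0\}$ has dimension $2n-n=n$. The identity $M_1^T\Omega N_1=0$ together with $\mathrm{rank}\,N_1=n$ shows $\mathrm{col}(N_1)=W$. By the same argument $\mathrm{col}(N_2)$ is the $\Omega$-complement of $\mathrm{col}(M_2)$. Since $\mathrm{col}(M_2)=\mathrm{col}(M_1)$, this complement is also $W$. Thus $N_1=N_2s'$ for some invertible $s'$, and $M_2^T\Omega N_1=M_2^T\Omega N_2\,s'=0$.

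By the reformulation, $S_2^{-1}S_1$ is then block diagonal. Its diagonal blocks are automatically symplectic with respect to $Q$, because $S_2^{-1}S_1$ is symplectic and block diagonal.
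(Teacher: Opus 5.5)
Your proof is correct. The forward direction matches the paper's. So does the first half of the converse: hash equality $\Leftrightarrow$ $G(S_1)=G(S_2)$, hence $\mathrm{col}(M_1)=\mathrm{col}(M_2)$ and $M_1=M_2s$. The difference is in how you finish.

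The paper does not reformulate block-diagonality. It notes that $S\Omega S^T=MQM^T+NQN^T$, so the row identities give $N_iQN_i^T=\Omega+M_iQM_i^T$. The hash equality therefore also yields $N_1QN_1^T=N_2QN_2^T$. Rerunning the column-space lemma with $N_i^T\Omega N_i=Q$ gives $N_1=N_2s'$, and $S_1=S_2\,\mathrm{diag}(s,s')$ follows by juxtaposing the two block columns.

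You instead use $S_2^{-1}=\Omega S_2^T\Omega$ to turn block-diagonality into the two orthogonality conditions $N_2^T\Omega M_1=0$ and $M_2^T\Omega N_1=0$. You then get $\mathrm{col}(N_1)=\mathrm{col}(N_2)$ from the fact that $\mathrm{col}(N_i)$ is the $\Omega$-orthogonal complement of $\mathrm{col}(M_i)$. The paper's route is symmetric in $M$ and $N$ and reuses a single lemma. Yours makes explicit that the class is determined by the one subspace $\mathrm{col}(M)$.

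Your reformulation in fact makes the step you flagged as delicate immediate. Since $M_2=M_1s^{-1}$, you have $M_2^T\Omega N_1=(s^{-1})^TM_1^T\Omega N_1=0$ directly from the block identity for $S_1$. So you need neither the dimension count nor anything about $N_1,N_2$ beyond $M_i^T\Omega N_i=0$.
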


\begin{proof}
We first prove the necessary condition. If $S_1$ and $S_2$ are in the same
equivalence class, then
\begin{align}
    S_1 =
    S_2
    \begin{pmatrix}
        s & 0 \\
        0 & s'
    \end{pmatrix}.
\end{align}
Therefore $M_1=M_2s$. Since the block-diagonal matrix is symplectic, its
upper-left block satisfies $sQs^T=Q$. It follows that
\begin{align}
    M_1 Q M_1^T
    =
    M_2 s Q s^T M_2^T
    =
    M_2 Q M_2^T.
\end{align}
Equating the independent blocks gives
\begin{align}
    A_1 Q A_1^T &= A_2 Q A_2^T,\\
    A_1 Q C_1^T &= A_2 Q C_2^T,\\
    C_1 Q C_1^T &= C_2 Q C_2^T.
\end{align}
Thus the three hash matrices are identical.

We now prove the sufficient condition. Assume
\begin{align}
    A_1 Q A_1^T &= A_2 Q A_2^T,\\
    A_1 Q C_1^T &= A_2 Q C_2^T,\\
    C_1 Q C_1^T &= C_2 Q C_2^T.
\end{align}
Since $Q^T=Q$, the transpose of the second equality gives
$C_1QA_1^T=C_2QA_2^T$. Hence these three equalities are equivalent to
\begin{align}
    M_1 Q M_1^T = M_2 Q M_2^T.
    \label{eq:complete-gram-equality}
\end{align}
By the column-space lemma, $M_iQM_i^T$ and $M_i$ have the same column space.
Eq.~\eqref{eq:complete-gram-equality} therefore implies that $M_1$ and
$M_2$ have the same column space. Moreover, $M_i$ has full column rank because
$M_i^T\Omega M_i=Q$ and $Q$ is invertible. Thus there exists an invertible
$n\times n$ matrix $s$ such that
\begin{align}
    M_1=M_2s.
    \label{eq:complete-M-relation}
\end{align}
Using Eq.~\eqref{eq:complete-MN-identities}, we also obtain
\begin{align*}
    Q
    =
    M_1^T\Omega M_1
    =
    s^T M_2^T\Omega M_2s
    =
    s^TQs,
\end{align*}
such that $s$ is symplectic with respect to $Q$. Equivalently, since $s$ is
invertible, $sQs^T=Q$.

The same argument can be applied to the second block column. By
Eqs.~\eqref{eq:complete-row-1}--\eqref{eq:complete-row-3}, the assumed
equality of the three hash matrices implies
\begin{align}
    N_1 Q N_1^T = N_2 Q N_2^T.
\end{align}
Repeating the column-space argument for $N_i$ and using
$N_i^T\Omega N_i=Q$, we obtain an invertible $n\times n$ matrix $s'$ such that
\begin{align}
    N_1=N_2s'.
\end{align}
Moreover,
\begin{align*}
    Q
    =
    N_1^T\Omega N_1
    =
    {s'}^T N_2^T\Omega N_2s'
    =
    {s'}^TQs',
\end{align*}
so $s'$ is also symplectic with respect to $Q$. Combining the two block-column
relations gives
\begin{align}
    S_1
    =
    \begin{pmatrix}
        M_1 & N_1
    \end{pmatrix}
    =
    \begin{pmatrix}
        M_2s & N_2s'
    \end{pmatrix}
    =
    S_2
    \begin{pmatrix}
        s & 0 \\
        0 & s'
    \end{pmatrix}.
\end{align}
Thus $S_1$ and $S_2$ differ only by independent Clifford actions on the two
halves and belong to the same equivalence class.
\end{proof}

\section*{Acknowledgment}
This work was supported by the Quantum Science and Technology-National Science and Technology Major Project (2023ZD0300200) and the Fundamental Research Funds for the Central Universities. MQ acknowledges the support from the National Natural Science Foundation of China (Grant No. 12522406 and No. 12274290), the Innovation Program for Quantum Science and Technology (2021ZD0301902), and the National Key Research and Development Program of MOST of China (2022YFA1405400).

\section*{Code and data availability}
Source code to reproduce the reported results can be found
at Ref. \cite{campsmodule}.
Some examples can be found at Ref. \cite{campsexample}.

\bibliography{main}

@misc{campsmodule,
    howpublished = {\url{https://github.com/Quantum-Chemistry-Group-BNU/FOCUS/tree/master/pyfocus/camps}},
    year = {2026}
}

@misc{campsexample,
    howpublished = {\url{https://github.com/dilandaer/CAMPS_examples}},
    year = {2026}
}

@article{huang2026augmenting,
  title={Augmenting density matrix renormalization group with matchgates and clifford circuits},
  author={Huang, Jiale and Qian, Xiangjian and Li, Zhendong and Qin, Mingpu},
  journal={Chin. Phys. Lett.},
  volume={43},
  number={4},
  pages={040708},
  year={2026},
  publisher={Chinese Physical Society and IOP Publishing Ltd}
}

@misc{linkToFCIDUMPfe2fe4,
   author = {Li, Zhendong},
   title = {Active-space model for Iron-Sulfur Clusters},
   year = {2026},
   howpublished = {\url{https://github.com/zhendongli2008/Active-space-model-for-Iron-Sulfur-Clusters}},
   note = {Accessed 2026-06-05}
}

@article{li2021expressibility,
  title={Expressibility of comb tensor network states (CTNS) for the P-cluster and the FeMo-cofactor of nitrogenase},
  author={Li, Zhendong},
  journal={Electron. Struct.},
  volume={3},
  number={1},
  pages={014001},
  year={2021},
}

@article{chan2016matrix,
  title={Matrix product operators, matrix product states, and ab initio density matrix renormalization group algorithms},
  author={Chan, Garnet Kin and Keselman, Anna and Nakatani, Naoki and Li, Zhendong and White, Steven R},
  journal={J. Chem. Phys.},
  volume={145},
  number={1},
  pages={014102},
  year={2016},
}

@article{li2017spin,
  title={Spin-projected matrix product states: Versatile tool for strongly correlated systems},
  author={Li, Zhendong and Chan, Garnet Kin-Lic},
  journal={J. Chem. Theory Comput.},
  volume={13},
  number={6},
  pages={2681--2695},
  year={2017},
}

@article{eisert2010colloquium,
  title={Colloquium: Area laws for the entanglement entropy},
  author={Eisert, Jens and Cramer, Marcus and Plenio, Martin B},
  journal={Rev. Mod. Phys.},
  volume={82},
  number={1},
  pages={277},
  year={2010},
}

@article{orus2014practical,
  title={A practical introduction to tensor networks: Matrix product states and projected entangled pair states},
  author={Or{\'u}s, Rom{\'a}n},
  journal={Ann. Phys.},
  volume={349},
  pages={117--158},
  year={2014},
}

@article{chan2011density,
  title={The density matrix renormalization group in quantum chemistry},
  author={Chan, Garnet Kin-Lic and Sharma, Sandeep},
  journal={Annu. Rev. Phys. Chem.},
  volume={62},
  pages={465--481},
  year={2011},
}

@article{barcza2011quantum,
  title={Quantum-information analysis of electronic states of different molecular structures},
  author={Barcza, G and Legeza, {\"O} and Marti, KH and Reiher, M},
  journal={Phys. Rev. A},
  volume={83},
  number={1},
  pages={012508},
  year={2011},
}

@article{olivares2015ab,
  title={The ab-initio density matrix renormalization group in practice},
  author={Olivares-Amaya, Roberto and Hu, Weifeng and Nakatani, Naoki and Sharma, Sandeep and Yang, Jun and Chan, Garnet Kin-Lic},
  journal={J. Chem. Phys.},
  volume={142},
  number={3},
  pages={034102},
  year={2015},
}

@article{jordan1928pauli,
  title={About the Pauli exclusion principle},
  author={Jordan, P and Wigner, Eugene P},
  journal={Z. Phys.},
  volume={47},
  number={631},
  pages={14--75},
  year={1928}
}

@article{Dehaene2003Cliffordgroup,
  title = {Clifford group, stabilizer states, and linear and quadratic operations over GF(2)},
  author = {Dehaene, Jeroen and De Moor, Bart},
  journal = {Phys. Rev. A},
  volume = {68},
  issue = {4},
  pages = {042318},
  numpages = {10},
  year = {2003},
  month = {Oct},
  doi = {10.1103/PhysRevA.68.042318},
  url = {https://link.aps.org/doi/10.1103/PhysRevA.68.042318}
}

@article{Koenig2014Cliffordenumerate,
    author = {Koenig, Robert and Smolin, John A.},
    title = {How to efficiently select an arbitrary Clifford group element},
    journal = {J. Math. Phys.},
    volume = {55},
    number = {12},
    pages = {122202},
    year = {2014},
    month = {12},
    issn = {0022-2488},
    doi = {10.1063/1.4903507},
    url = {https://doi.org/10.1063/1.4903507},
}

@article{mcclean2020openfermion,
  title={OpenFermion: the electronic structure package for quantum computers},
  author={McClean, Jarrod R and Rubin, Nicholas C and Sung, Kevin J and Kivlichan, Ian D and Bonet-Monroig, Xavier and Cao, Yudong and Dai, Chengyu and Fried, E Schuyler and Gidney, Craig and Gimby, Brendan and Gokhale, Pranav and H{\"a}ner, Thomas and Hardikar, Tarini and Havl{\'i}{\v c}ek, Vojt{\v e}ch and Huang, Oscar and Jiang, Zhang and Liu, Matthew and McArdle, Sam and Neeley, Matthew and O'Brien, Thomas and O'Gorman, Bryan and Ozfidan, Isil and Radin, Maxwell D and Romero, Jonathan and Sawaya, Nicolas P D and Setia, Kanav and Sim, Sukin and Steiger, Damian S and Steudtner, Mark and Sun, Qiming and Sun, Wei and Wang, Daochen and Zhang, Fang and Babbush, Ryan},
  journal={Quantum Sci. Technol.},
  volume={5},
  number={3},
  pages={034014},
  year={2020},
}

@article{white_density_1992,
	title = {Density matrix formulation for quantum renormalization groups},
	volume = {69},
	url = {https://link.aps.org/doi/10.1103/PhysRevLett.69.2863},
	doi = {10.1103/PhysRevLett.69.2863},
	number = {19},
	urldate = {2024-09-22},
	journal = {Phys. Rev. Lett.},
	author = {White, Steven R.},
	month = nov,
	year = {1992},
	pages = {2863--2866},
}

@article{huang_nonstabilizerness_2025,
	title = {Nonstabilizerness entanglement entropy: {A} measure of hardness in the classical simulation of quantum many-body systems with tensor network states},
	volume = {112},
	issn = {2469-9926, 2469-9934},
	shorttitle = {Nonstabilizerness entanglement entropy},
	url = {https://link.aps.org/doi/10.1103/gxdn-zwrw},
	doi = {10.1103/gxdn-zwrw},
	language = {en},
	number = {1},
	urldate = {2025-08-04},
	journal = {Phys. Rev. A},
	author = {Huang, Jiale and Qian, Xiangjian and Qin, Mingpu},
	month = jul,
	year = {2025},
	pages = {012425},
}

@article{xiang_distributed_2024,
	title = {Distributed {Multi}-{GPU} \textit{{Ab} {Initio}} {Density} {Matrix} {Renormalization} {Group} {Algorithm} with {Applications} to the {P}-{Cluster} of {Nitrogenase}},
	volume = {20},
	copyright = {https://doi.org/10.15223/policy-029},
	issn = {1549-9618, 1549-9626},
	url = {https://pubs.acs.org/doi/10.1021/acs.jctc.3c01228},
	doi = {10.1021/acs.jctc.3c01228},
	language = {en},
	number = {2},
	urldate = {2025-11-23},
	journal = {J. Chem. Theory Comput.},
	author = {Xiang, Chunyang and Jia, Weile and Fang, Wei-Hai and Li, Zhendong},
	month = jan,
	year = {2024},
	pages = {775--786},
}

@article{qian_augmenting_2024,
	title = {Augmenting {Density} {Matrix} {Renormalization} {Group} with {Clifford} {Circuits}},
	volume = {133},
	url = {https://link.aps.org/doi/10.1103/PhysRevLett.133.190402},
	doi = {10.1103/PhysRevLett.133.190402},
	number = {19},
	urldate = {2024-12-03},
	journal = {Phys. Rev. Lett.},
	author = {Qian, Xiangjian and Huang, Jiale and Qin, Mingpu},
	month = nov,
	year = {2024},
	pages = {190402},
}

@article{huang_clifford_2025,
	title = {Clifford circuits augmented matrix product states for fermion systems},
	volume = {112},
	url = {https://link.aps.org/doi/10.1103/lwwp-6rqk},
	doi = {10.1103/lwwp-6rqk},
	number = {20},
	urldate = {2026-01-06},
	journal = {Phys. Rev. B},
	author = {Huang, Jiale and Qian, Xiangjian and Qin, Mingpu},
	month = nov,
	year = {2025},
	pages = {205106},
}

@article{qian_clifford_2025,
	title = {Clifford {Circuits} {Augmented} {Time}-{Dependent} {Variational} {Principle}},
	volume = {134},
	url = {https://link.aps.org/doi/10.1103/PhysRevLett.134.150404},
	doi = {10.1103/PhysRevLett.134.150404},
	number = {15},
	urldate = {2026-01-06},
	journal = {Phys. Rev. Lett.},
	author = {Qian, Xiangjian and Huang, Jiale and Qin, Mingpu},
	month = apr,
	year = {2025},
	pages = {150404},
}

@article{fu_clifford_2025,
	title = {Clifford augmented density matrix renormalization group for \textit{ab initio} quantum chemistry},
	volume = {112},
	issn = {2469-9950, 2469-9969},
	url = {https://link.aps.org/doi/10.1103/4ng4-vzz6},
	doi = {10.1103/4ng4-vzz6},
	language = {en},
	number = {19},
	urldate = {2025-12-22},
	journal = {Phys. Rev. B},
	author = {Fu, Lizhong and Shang, Honghui and Yang, Jinlong and Guo, Chu},
	month = nov,
	year = {2025},
	pages = {195111},
}

@article{sun_recent_2020,
	title = {Recent developments in the {PySCF} program package},
	volume = {153},
	issn = {0021-9606},
	url = {https://doi.org/10.1063/5.0006074},
	doi = {10.1063/5.0006074},
	number = {2},
	urldate = {2026-01-06},
	journal = {J. Chem. Phys.},
	author = {Sun, Qiming and Zhang, Xing and Banerjee, Samragni and Bao, Peng and Barbry, Marc and Blunt, Nick S. and Bogdanov, Nikolay A. and Booth, George H. and Chen, Jia and Cui, Zhi-Hao and Eriksen, Janus J. and Gao, Yang and Guo, Sheng and Hermann, Jan and Hermes, Matthew R. and Koh, Kevin and Koval, Peter and Lehtola, Susi and Li, Zhendong and Liu, Junzi and Mardirossian, Narbe and McClain, James D. and Motta, Mario and Mussard, Bastien and Pham, Hung Q. and Pulkin, Artem and Purwanto, Wirawan and Robinson, Paul J. and Ronca, Enrico and Sayfutyarova, Elvira R. and Scheurer, Maximilian and Schurkus, Henry F. and Smith, James E. T. and Sun, Chong and Sun, Shi-Ning and Upadhyay, Shiv and Wagner, Lucas K. and Wang, Xiao and White, Alec and Whitfield, James Daniel and Williamson, Mark J. and Wouters, Sebastian and Yang, Jun and Yu, Jason M. and Zhu, Tianyu and Berkelbach, Timothy C. and Sharma, Sandeep and Sokolov, Alexander Yu. and Chan, Garnet Kin-Lic},
	month = jul,
	year = {2020},
	pages = {024109},
}

@misc{QuantumClifford_github,
  author       = {{QuantumSavory}},
  title        = {QuantumClifford.jl},
  year         = {2026},
  howpublished = {\url{https://github.com/QuantumSavory/QuantumClifford.jl}},
  note         = {Accessed 2026-06-05}
}

@article{CASCADE,
  author = {D'Cunha, Ruhee and Crawford, T. Daniel and Motta, Mario and Rice, Julia E.},
  title = {Challenges in the Use of Quantum Computing Hardware-Efficient Ans{\"a}tze in Electronic Structure Theory},
  journal = {J. Phys. Chem. A},
  volume = {127},
  number = {15},
  pages = {3437--3448},
  year = {2023},
  doi = {10.1021/acs.jpca.2c08430}
}

@article{XYZ2F,
author = {Xiao, Xiaoxiao and Zhao, Hewang and Ren, Jiajun and Fang, Wei-Hai and Li, Zhendong},
title = {Physics-Constrained Hardware-Efficient Ansatz on Quantum Computers That Is Universal, Systematically Improvable, and Size-Consistent},
journal = {J. Chem. Theory Comput.},
volume = {20},
number = {5},
pages = {1912-1922},
year = {2024},
doi = {10.1021/acs.jctc.3c00966},

URL = { 
    
        https://doi.org/10.1021/acs.jctc.3c00966
    
    

},
eprint = { 
    
        https://doi.org/10.1021/acs.jctc.3c00966
    
    

}

}

@article{peruzzo2014variational,
  title={A variational eigenvalue solver on a photonic quantum processor},
  author={Peruzzo, Alberto and McClean, Jarrod and Shadbolt, Peter and Yung, Man-Hong and Zhou, Xiao-Qi and Love, Peter J and Aspuru-Guzik, Al{\'a}n and O'Brien, Jeremy L},
  journal={Nat. Commun.},
  volume={5},
  number={1},
  pages={4213},
  year={2014}
}

@article{mcclean2016theory,
  title={The theory of variational hybrid quantum-classical algorithms},
  author={McClean, Jarrod R and Romero, Jonathan and Babbush, Ryan and Aspuru-Guzik, Al{\'a}n},
  journal={New J. Phys.},
  volume={18},
  number={2},
  pages={023023},
  year={2016},
}

@article{sun2024toward,
  title={Toward chemical accuracy with shallow quantum circuits: A Clifford-based Hamiltonian engineering approach},
  author={Sun, Jiace and Cheng, Lixue and Li, Weitang},
  journal={J. Chem. Theory Comput.},
  volume={20},
  number={2},
  pages={695--707},
  year={2024},
}

@article{ren2020general,
  title={A general automatic method for optimal construction of matrix product operators using bipartite graph theory},
  author={Ren, Jiajun and Li, Weitang and Jiang, Tong and Shuai, Zhigang},
  journal={J. Chem. Phys.},
  volume={153},
  number={8},
  pages={084118},
  year={2020},
}

@article{vidal_class_2008,
	title = {Class of {Quantum} {Many}-{Body} {States} {That} {Can} {Be} {Efficiently} {Simulated}},
	volume = {101},
	url = {https://link.aps.org/doi/10.1103/PhysRevLett.101.110501},
	doi = {10.1103/PhysRevLett.101.110501},
	number = {11},
	urldate = {2024-12-30},
	journal = {Phys. Rev. Lett.},
	author = {Vidal, G.},
	month = sep,
	year = {2008},
	pages = {110501},
}

@article{vidal_entanglement_2007,
	title = {Entanglement {Renormalization}},
	volume = {99},
	number = {22},
	doi = {10.1103/PhysRevLett.99.220405},
	journal = {Phys. Rev. Lett.},
	author = {Vidal, G},
	year = {2007},
	pages = {220405},
}

@article{vidal2003efficient,
  title={Efficient classical simulation of slightly entangled quantum computations},
  author={Vidal, Guifr{\'e}},
  journal={Phys. Rev. Lett.},
  volume={91},
  number={14},
  pages={147902},
  year={2003},
}

@article{cirac2021matrix,
  title={Matrix product states and projected entangled pair states: Concepts, symmetries, theorems},
  author={Cirac, J Ignacio and Perez-Garcia, David and Schuch, Norbert and Verstraete, Frank},
  journal={Rev. Mod. Phys.},
  volume={93},
  number={4},
  pages={045003},
  year={2021},
}

@article{ran2020encoding,
  title={Encoding of matrix product states into quantum circuits of one-and two-qubit gates},
  author={Ran, Shi-Ju},
  journal={Phys. Rev. A},
  volume={101},
  number={3},
  pages={032310},
  year={2020},
}

@article{qian2023augmenting,
  title={Augmenting density matrix renormalization group with disentanglers},
  author={Qian, Xiangjian and Qin, Mingpu},
  journal={Chin. Phys. Lett.},
  volume={40},
  number={5},
  pages={057102},
  year={2023},
}

@article{evenbly2009algorithms,
  title={Algorithms for entanglement renormalization},
  author={Evenbly, Glen and Vidal, Guifr{\'e}},
  journal={Phys. Rev. B},
  volume={79},
  number={14},
  pages={144108},
  year={2009},
}

@article{gottesman1998heisenberg,
  title={The Heisenberg representation of quantum computers},
  author={Gottesman, Daniel},
  journal={arXiv Preprint;},
  note={arXiv: quant-ph/9807006},
  year={1998}
}

@book{nielsen2010quantum,
  title={Quantum computation and quantum information},
  author={Nielsen, Michael A and Chuang, Isaac L},
  year={2010},
  publisher={Cambridge university press}
}

@article{aaronson2004improved,
  title={Improved simulation of stabilizer circuits},
  author={Aaronson, Scott and Gottesman, Daniel},
  journal={Phys. Rev. A},
  volume={70},
  number={5},
  pages={052328},
  year={2004},
}

@article{bravyi2002fermionic,
  title={Fermionic quantum computation},
  author={Bravyi, Sergey B and Kitaev, Alexei Yu},
  journal={Ann. Phys.},
  volume={298},
  number={1},
  pages={210--226},
  year={2002},
}

@article{shang2023schrodinger,
  title={Schr{\"o}dinger-Heisenberg variational quantum algorithms},
  author={Shang, Zhong-Xia and Chen, Ming-Cheng and Yuan, Xiao and Lu, Chao-Yang and Pan, Jian-Wei},
  journal={Phys. Rev. Lett.},
  volume={131},
  number={6},
  pages={060406},
  year={2023},
}

@article{mishmash2023hierarchical,
  title={Hierarchical Clifford transformations to reduce entanglement in quantum chemistry wave functions},
  author={Mishmash, Ryan V and Gujarati, Tanvi P and Motta, Mario and Zhai, Huanchen and Chan, Garnet Kin-Lic and Mezzacapo, Antonio},
  journal={J. Chem. Theory Comput.},
  volume={19},
  number={11},
  pages={3194--3208},
  year={2023},
}

@book{gottesman1997stabilizer,
  title={Stabilizer codes and quantum error correction},
  author={Gottesman, Daniel},
  year={1997},
  publisher={California Institute of Technology}
}

@article{li2025entanglement,
  title = {Entanglement-Minimized Orbitals Enable Faster Quantum Simulation of Molecules},
  author = {Li, Zhendong},
  journal = {Phys. Rev. Lett.},
  volume = {135},
  issue = {21},
  pages = {210601},
  numpages = {6},
  year = {2025},
  month = {Nov},
  doi = {10.1103/bwvc-z9hz},
  url = {https://link.aps.org/doi/10.1103/bwvc-z9hz}
}

@article{hehre1969self,
  title={Self-consistent molecular-orbital methods. I. Use of Gaussian expansions of Slater-type atomic orbitals},
  author={Hehre, Warren J and Stewart, Robert F and Pople, John A},
  journal={J. Chem. Phys.},
  volume={51},
  number={6},
  pages={2657--2664},
  year={1969},
}

@article{hehre1972self,
  title={Self-consistent molecular orbital methods. XII. Further extensions of Gaussian-type basis sets for use in molecular orbital studies of organic molecules},
  author={Hehre, Warren J and Ditchfield, Robert and Pople, John A},
  journal={J. Chem. Phys.},
  volume={56},
  number={5},
  pages={2257--2261},
  year={1972}
}

@article{seeley2012bravyi,
  title={The Bravyi-Kitaev transformation for quantum computation of electronic structure},
  author={Seeley, Jacob T and Richard, Martin J and Love, Peter J},
  journal={J. Chem. Phys.},
  volume={137},
  number={22},
  pages={224109},
  year={2012}
}

@article{whitfield2011simulation,
  title={Simulation of electronic structure Hamiltonians using quantum computers},
  author={Whitfield, James D and Biamonte, Jacob and Aspuru-Guzik, Al{\'a}n},
  journal={Mol. Phys.},
  volume={109},
  number={5},
  pages={735--750},
  year={2011}
}

@article{wang2025generalized,
  title={Generalized many-body perturbation theory for the electron correlation energy: Multireference random phase approximation via diagrammatic resummation},
  author={Wang, Yuqi and Fang, Wei-Hai and Li, Zhendong},
  journal={J. Phys. Chem. Lett.},
  volume={16},
  number={12},
  pages={3047--3055},
  year={2025}
}

@article{Hostens2005Stabilizer,
  title={Stabilizer states and Clifford operations for systems of arbitrary dimensions and modular arithmetic},
  author={Hostens, Erik and Dehaene, Jeroen and De Moor, Bart},
  journal={Phys. Rev. A},
  volume={71},
  number={4},
  pages={042315},
  year={2005}
}

@article{gxdn-zwrw,
  title = {Nonstabilizerness entanglement entropy: A measure of hardness in the classical simulation of quantum many-body systems with tensor network states},
  author = {Huang, Jiale and Qian, Xiangjian and Qin, Mingpu},
  journal = {Phys. Rev. A},
  volume = {112},
  issue = {1},
  pages = {012425},
  numpages = {11},
  year = {2025},
  month = {Jul},
  publisher = {American Physical Society},
  doi = {10.1103/gxdn-zwrw},
  url = {https://link.aps.org/doi/10.1103/gxdn-zwrw}
}

@book{xiang2023density,
  url={https://books.google.com/books?hl=en&lr=&id=E5fxEAAAQBAJ&oi=fnd&pg=PP1&ots=Hqdq-ApAx0&sig=xi-IvwDkPLk7CTWPcB_d5zRtFZk},
  title={{Density Matrix and Tensor Network Renormalization}},
  author={Xiang, Tao},
  year={2023},
  publisher={Cambridge University Press}
}
\end{document}